\newcommand{\efs}[1][]{\linearrow[#1]}
\newcommand{\ebs}[1][]{\zigzagarrow[#1]}
\newcommand{\efbs}[1][]{\llinearrow[#1]}
\newcommand{\redl}[1]{\efs[#1]}
\newcommand{\revredl}[1]{\ebs[#1]}
\newcommand{\frevredl}[1]{\efbs[#1]}
\newcommand{\fwlts}[2]{\efs[\protect{#2[#1]}]}
\newcommand{\bwlts}[2]{\ebs[\protect{#2[#1]}]}
\newcommand{\fbwlts}[2]{\efbs[\protect{#2[#1]}]}
\newcommand{\congru}{\equiv}
\DeclareMathOperator{\rem}{rm}
\newcommand{\orig}[1]{O_{#1}}
\newcommand{\bs}{\backslash}
\newcommand{\dep}{\lessdot}
\newcommand{\Left}{\mathrm{L}}
\newcommand{\Right}{\mathrm{R}}
\renewcommand{\L}{\Left}
\newcommand{\R}{\Right}
\newcommand{\rev}{^{\bullet}}
\newcommand{\eq}{\sim}
\newcommand{\conc}{\smile}
\newcommand{\names}{\ensuremath{\mathsf{N}}}
\newcommand{\labels}{\ensuremath{\mathsf{L}}}
\newcommand{\enhanced}{\ensuremath{\mathsf{E}}}
\newcommand{\keys}{\ensuremath{\mathsf{K}}}
\newcommand{\out}[1]{\overline{#1}}
\newcommand{\st}{s.t.\ }
\newcommand*{\resp}{resp.\@\xspace}
\newcommand{\defeqc}{\stackrel{\text{def}}{=}}
	\def\equationautorefname~#1\null{(#1)\null}
\newcommand*{\eg}{e.g.\@\xspace}
\newcommand*{\ie}{i.e.\@\xspace}
\newcommand*{\wrt}{w.r.t.\@\xspace}
\newcommand*{\wlg}{w.l.o.g.\@\xspace}
\newcommand*{\etc}{%
    \@ifnextchar{.}%
        {etc}%
        {etc.\@\xspace}%
}
\newcommand{\ccs}{\textsc{CCS}\xspace}
\newcommand{\ccsd}{\textsc{CCS}\(_{\text{D}}\)\xspace}
\newcommand{\ccsr}{\textsc{CCS}\(_{\text{!}}\)\xspace}
\newcommand{\ccsi}{\textsc{CCS}\(_{\text{*}}\)\xspace}
\newcommand{\ccsk}{\textsc{CCSK}\xspace}
\newcommand{\rccs}{\textsc{RCCS}\xspace}
\DeclareMathOperator{\std}{std}
\DeclareMathOperator{\key}{key}
\DeclareMathOperator{\markr}{m}
\DeclareSymbolFont{largesymbolsstix}{LS2}{stixex}{m}{n}
\DeclareMathDelimiter{\lBrace}{\mathopen} {largesymbolsstix}{"E8}{largesymbolsstix}{"0E}
\DeclareMathDelimiter{\rBrace}{\mathclose}{largesymbolsstix}{"E9}{largesymbolsstix}{"0F}
\DeclareSymbolFont{stixsymbols}{LS1}{stixscr}{m}{n}
\DeclareMathSymbol{\kay}{\mathalpha}{stixsymbols}{"6B}
\newcommand{\Rev}[1]{\ensuremath{#1\rev}} %
\definecolor{Gray}{gray}{0.85}
\renewenvironment{description}%
{\list{}{\leftmargin=10pt %
		\labelwidth\z@ \itemindent-\leftmargin
		}}%
{\endlist}
\newcommand{\citecomment}[2][]{\citen{#2}#1\citevar}
\newcommand{\citeone}[1]{\citecomment{#1}}
\newcommand{\citetwo}[2][]{\citecomment[,~#1]{#2}}
\newcommand{\citevar}{\@ifnextchar\bgroup{;~\citeone}{\@ifnextchar[{;~\citetwo}{]}}}
\newcommand{\citefirst}{\@ifnextchar\bgroup{\citeone}{\@ifnextchar[{\citetwo}{]}}}
\newcommand{\cites}{[\citefirst}
\title{Causal Consistent Replication in Reversible Concurrent Calculi}
\titlerunning{Causal Consistent Replication in Reversible Conc.\ Calculi}
\author{%
	Clément Aubert%
	\orcidID{0000-0001-6346-3043}
}
\institute{%
	School of Computer \& Cyber Sciences, Augusta University, Augusta, USA %
	\href{mailto:caubert@augusta.edu}{\email{caubert@augusta.edu}}, %
	\url{https://spots.augusta.edu/caubert/}
}
\begin{document}
	\maketitle
	\begin{abstract}
		Reversible computation is key in developing new, energy-efficient paradigms, but also in providing forward-only concepts with broader definitions and finer frames of study.
		Among other fields, the algebraic specification and representation of networks of agents have been greatly impacted by the study of reversible phenomena: reversible declensions of the calculus of communicating systems (\ccsk and \rccs) offer new semantic models, finer congruence relations, original properties, and revisits existing theories and results in a finer light.
		However, much remains to be done: concurrency, a central notion in establishing \emph{causal consistency}--a crucial property for reversibly systems--, was never given a clear and syntactical definition in \ccsk.
		While recursion was mentioned as a possible mechanism to inject infinite behaviors into the systems, replication was never studied.
		This work offers a solution to both problems, by leveraging a definition of concurrency developed for forward-only calculi using proved transition systems, by endowing \ccsk with a replication operator, and by studying the interplay of both notions.
		The system we obtain is the first reversible system capable of representing infinite behaviors that enjoys causal consistency, for our simple and purely syntactical notion of reversible concurrency.
		
		\keywords{%
			Formal semantics
			\and
			Process algebras and calculi
			\and
			Reversible Computation
			\and 
			Concurrency
			\and
			Replication
		}
	\end{abstract}

	\section{Introduction: Reversibility, Concurrency--Interplays}
	
	\textbf{Reversibility} has recently being dubbed \enquote{the future of life on earth}~\cite{nsf_reversibility} in one of the \href{https://www.nsf.gov/}{National Science Foundation}'s \emph{Idea Machine}, as it \textquote{promote\textins{s} a paradigm shift in the framing of research questions}, trying to locate \textquote{tipping points for irreversibility}.
	The study of bi-directional computation led to breakthroughs such as the design of CMOS adiabatic circuits~\cite{Frank2020} that will leverage Landauer's principle~\cite{doi:10.1126/sciadv.1501492,PhysRevLett.113.190601} to obtain \textquote{a greater level of energy efficiency than any other semiconductor\textins{\ldots} known today}.
	It also led to revisit established fields of computer science to obtain more general definitions, models, or tools, \eg regarding complexity~\cite{Lange2000,Paolini2018}, category~\cite{Kaarsgaard2017}, and automata~\cite{Axelsen2012,Holzer2017} theories, or \enquote{time-travel} debugging~\cite{Visan2011}, to name a few.
	
	\textbf{Concurrency Theory} has been re-shaped by reversibility as well: fine distinctions between causality and causation~\cite{Phillips2006} contradicted Milner's expansion laws~\cite[Example 4.11]{Lanese2021}, and the study of causal models for reversible computation led to novel correction criteria for causal semantics--both reversible and irreversible~\cite{Cristescu2015b}.
	\enquote{Traditional} equivalence relations have been captured syntactically~\cite{Aubert2020b}, while original observational equivalences were developed~\cite{Lanese2021}: reversibility triggered a global reconsideration of established theories and tools, with the clear intent of providing actionable methods for reversible systems~\cite{Lanese2018}, novel axiomatic foundations~\cite{Lanese2020} and original non-interleaving models~\cite{Aubert2015d,Graversen2021,Cristescu2015b}.
	
	\textbf{Two Systems} extend the Calculus of Communicating Systems (CCS)~\cite{Milner1980}--the godfather of \(\pi\)-~\cite{Sangiorgi2001}, Ambient~\cite{Zappa2005}, applied~\cite{Abadi2018} and distributed~\cite{Hennessy2007} calculi, among others--with reversible features.
	Reversible CCS (RCCS)~\cite{Danos2004} and CCS with keys (CCSK)~\cite{Phillips2006} are similarly the source of most~\cite{Arpit2017,Cristescu2015b,Medic2020,Mezzina2017}--if not all--of later systems developed to enhance reversible systems with some respect (rollback operator, name-passing abilities, probabilistic features).
	Even if those two systems share a lot of similarities~\cite{Lanese2019}, they diverge in some respects that are not fully understood--typically, it seems that different notions of \enquote{contexts with history} led to establish the existence of congruences for \ccsk~\cite[Proposition 4.9]{Lanese2021} or the impossibility thereof for \rccs~\cite[Theorem 2]{Aubert2021d}.
	However, they also share some shortcomings, and we offer to tackle two of them for \ccsk: a clear, syntactical, definition of concurrency that coincides with its forward-only counterpart, and the capacity of representing infinite behaviors.
	
	\textbf{Reversible Concurrency} is of course a central notion in the study of \rccs and \ccsk, as it enables the definition of \enquote{causal consistency}--a principle that, intuitively, states that backward reductions can undo an action only if its consequences have already been undone--and to obtain models where concurrency and causation are decorrelated~\cite{Phillips2007b}.
	As such, it have been studied from multiple angles, but, in our opinion, never in a fully satisfactory manner.
	Ad-hoc definitions relying on memory inclusion~\cite[Definition 3.11]{Krivine2006} or disjointness~\cite[Definition 7]{Danos2004} for \rccs, and semantical notions for both \rccs~\cite{Aubert2015d,Aubert2016jlamp,Aubert2020b} and \ccsk~\cite{Graversen2021,Phillips2007,Ulidowski2014} have been proposed, but, to our knowledge, have never been \begin{enumerate*}
		\item compared to each other,
		\item proven to be extensions of pre-existing forward-only definitions of concurrency.
	\end{enumerate*}
	Our contribution starts by introducing the first purely syntactical definition of concurrency for \ccsk, by extending the \enquote{universal} concurrency developed for forward-only \ccs~\cite{Degano2003}, that leveraged \emph{proved} transition systems~\cite{Degano2001}.
	
	\textbf{Infinite Behaviors} can be captured in \ccs using recursion (\ccsd), replication (\ccsr) or iteration, three strategies that are not equivalent~\cite{Palamidessi2005} \wrt decidability of divergence~\cite{Giambiagi2004} or termination~\cite{Busi2009}.
	Causal semantics for \ccsr are scarce: aside from a multiset semantics for the \(\pi\)-calculus with replication~\cite{Engelfriet1999}, the only work~\cite{Degano2003} we are aware of that addresses \ccsr's causal semantics remained unnoticed, despite its interest: it leverages an original notion of concurrency that was proven to be \enquote{universal} as it conservatively extends or agrees with pre-existing causal semantics of \ccs and \ccsd~\cite[Theorem 1]{Degano2003}.
	In reversible concurrent systems, only recursion have been investigated~\cite{Graversen2021,Krivine2006} or mentioned~\cite{Danos2004,Danos2005}, but without discussing causal consistency (with one exception~\cite[Lemma F.3]{Graversen2021}).
	Our contribution endows \ccsk with the first reversible notion of replication, and studies expected properties--such as the square diamond--as well as \enquote{reversible properties} for this infinite, reversible and concurrent system.
	
	\textbf{Our Contribution} is in two parts: after briefly recalling \ccs, its extensions to infinite behaviors and \ccsk (\autoref{sec:ccs+ccsk}), we define and study in detail our new causal semantics (\autoref{ssec:proved-ccsk}).
	We make crucial use of the loop lemma (\autoref{lemma:loop}) to define concurrency between coinitial traces in terms of concurrency between composable traces--a simple but ingenious mechanism that considerably reduces the definition and proof burdens.
	We then prove the \enquote{usual} sanity checks--the square and diamond properties (\autoref{sec:diamond})--and discuss other important properties--among which causal consistency (\autoref{sec:sanity}).
	
	We then discuss the interactions between extensions to infinite behavior and reversibility (\autoref{ssec:discussion}), and justify our choice of endowing \ccsk with replication (\autoref{ssec:repl-ccsk}).
	Our definition of concurrency is then extended to accommodate infinite behaviors, and most of our sanity checks are still valid (\autoref{ssec:square-inf}), even if our processes can now sometimes have two conflicting backward reductions--a phenomenon not observed in any other reversible system, to our knowledge.
	This oddity is then taken care of thanks to a simple restriction on identifiers, and causal consistency is restored (\autoref{ssec:causal-infinite}).%
	
	Our construction is built from simple definitions that are originally combined to obtain elegantly results--typically, the square property is derived from what we called the forward and sideways diamonds by leveraging the loop lemma--in the vein of the \enquote{axiomatic approach} to reversibility~\cite{Lanese2020}.
	We also briefly discuss structural congruence (\autoref{rem:congruence}), and sketch possible strategies to study infinite behaviors that duplicate their past.
	All proofs are included in \autoref{sec:proofs} for completeness, with their main arguments given in the body of the paper.
	A small technical lemma, intuitively explained in the body of the paper, is added in \autoref{app:sub-term} for completeness.

	\section{Infinite and Reversible Process Calculi}
	\label{sec:ccs+ccsk}
	\subsection{CCS with Infinite Behaviors}
	\subsubsection{Finite Core \ccs}
	We briefly recall the (forward-only) \enquote{finite fragment} of the core of \ccs (simply called \ccs) following a standard presentation~\cite{Busi2009}.
	The three different strategies to extend \ccs with infinite behavior--excluding recursion expressions and parametric definitions~\cite{Giambiagi2004}--will come in a second moment.
	
	\begin{definition}[(Co-)names and labels]
		Let \(\names=\{a,b,c,\dots\}\) be a set of \emph{names} and \(\out{\names}=\{\out{a},\out{b},\out{c},\dots\}\) its set of \emph{co-names}.
		The set of \emph{labels} \(\labels\) is \(\names \cup \out{\names} \cup\{\tau\}\), and we use \(\alpha\), \(\beta\) (\resp \(\lambda\)) to range over \(\labels\) (\resp \(\labels \bs \{\tau\}\)).
		A bijection \(\out{\cdot}:\names \to \out{\names}\), whose inverse is also written \(\out{\cdot}\), 
		gives the \emph{complement} of a name.%
	\end{definition}
	
	\begin{definition}[Operators]
		\label{def:operators}
		\ccs processes are defined as usual:
		\begin{multicols}{2}
			\noindent
			\begin{align}
				P,Q  \coloneqq  
				& P \bs \alpha \tag{Restriction} \\
				& P + Q  \tag{Sum}               
			\end{align}
			\begin{align}
				& \alpha. P \tag{Prefix}              \\
				& P \mid Q \tag{Parallel composition} 
			\end{align}
		\end{multicols}
		The inactive process \(0\) is omitted when preceded by a prefix, and the biding power of the operators, from highest to lowest, is 
		\(\bs \alpha\), \(\alpha.\), \(!\) (replication, introduced next), \(+\) and \(\mid\), 
		so that \eg \(\alpha . P + Q \bs \alpha \mid ! P + a\) is to be read as \(
		(
		(\alpha . P) + (Q\bs \alpha)
		)
		\mid ((!P) + (a.0))\).
		In a process \(P \mid Q\), we call \(P\) and \(Q\) \emph{threads}.
	\end{definition}
	
	The labeled transition system (LTS) for \ccs, denoted \(\redl{\alpha}\), is reminded in \autoref{fig:ltsrules}, where the \enquote{Infinite Group} should be ignored for now.
	\begin{figure}[h]
		\begin{tcolorbox}[title = Action and Restriction]
			\hfill
			\begin{prooftree}
				\hypo{}
				\infer[]1[act.]{\alpha. P \redl{\alpha}  P}
			\end{prooftree}
			\hfill
			\begin{prooftree}
				\hypo{ P   \redl{\alpha}P '}
				\infer[left label={\(a \notin \{\alpha, \out{\alpha}\}\)}]1[res.]{P  \bs a  \redl{\alpha}P ' \bs a}
			\end{prooftree}
			\hfill~
		\end{tcolorbox}
		
		\begin{tcolorbox}[adjusted title=Parallel Group]
			\hfill
			\begin{prooftree}
				\hypo{P \redl{\alpha}P'}
				\infer%
				1[\(\mid_{\L}\)]{P \mid  Q \redl{\alpha} P' \mid Q}
			\end{prooftree}
			\hfill
			\begin{prooftree}
				\hypo{P \redl{\lambda}  P'}
				\hypo{Q \redl{\out{\lambda}}  Q'}
				\infer%
				2[syn.]{P \mid Q \redl{\tau}  P' \mid Q'}
			\end{prooftree}
			\hfill
			\begin{prooftree}
				\hypo{Q \redl{\alpha}Q'}
				\infer%
				1[\(\mid_{\R}\)]{P \mid  Q  \redl{\alpha}P \mid Q'}
			\end{prooftree}
			\hfill
		\end{tcolorbox}
		\begin{tcolorbox}[adjusted title=Sum Group]
			\hfill
			\begin{prooftree}
				\hypo{P \redl{\alpha} P'}
				\infer1[\(+_{\L}\)]{Q + P \redl{\alpha} P'}
			\end{prooftree}
			\hfill
			\begin{prooftree}
				\hypo{P \redl{\alpha} P'}
				\infer1[\(+_{\R}\)]{Q + P  \redl{\alpha} P'}
			\end{prooftree}
			\hfill~
		\end{tcolorbox}
		\begin{tcolorbox}[adjusted title=Infinite Group]
			\begin{prooftree}
				\hypo{P \redl{\alpha}P'}
				\hypo{D \defeqc P}
				\infer[]2[const.]{D \redl{\alpha} P'}
			\end{prooftree}
			\hfill
			\begin{prooftree}
				\hypo{P \mid !P \redl{\alpha}  P'}
				\infer[]1[repl.\(_0\)]{!P \redl{\alpha} P'}
			\end{prooftree}
			\hfill
			\begin{prooftree}
				\hypo{P \redl{\alpha} P'}
				\infer[]1[iter.]{P^* \redl{\alpha} P' ; P^*}
			\end{prooftree}
			\\[.75em]
			{\centering \tikz[baseline=-0.5ex]\draw [thick,dash dot] (0,0)--(8,0);}
			\\[.75em]
			\hfill
			\begin{prooftree}
				\hypo{P \redl{\alpha}  P'}
				\infer[]1[repl.\(_1\)]{!P \redl{\alpha} !P \mid P'}
			\end{prooftree}
			\hfill
			\begin{prooftree}
				\hypo{P \redl{\lambda} P'}
				\hypo{P \redl{\out{\lambda}}  P''}
				\infer[]2[repl.\(_2\)]{!P \redl{\tau} !P \mid (P' \mid P'')}
			\end{prooftree}
			\hfill~
		\end{tcolorbox}
		\caption{Rules of the labeled transition system (LTS) for \ccs, \ccsd, \ccsr and \ccsi}
		\label{fig:ltsrules}
	\end{figure}
	
	\subsubsection{Infinite \ccs}
	Usually, the following three alternative approaches to the addition of infinite behaviors to \ccs are explored and compared~\cite{Busi2003,Busi2009}.
	
	\begin{definition}
		\begin{description}
			\item[\ccsd] (\enquote{\ccs with recursion}) is obtained by adding a denumerable set of constants, ranged over by \(D\), the production rule \(P \coloneqq D\) to \autoref{def:operators}, and the rule const. to the LTS.
			\item[\ccsr] (\enquote{\ccs with replication}) is obtained by adding the production rule \(P \coloneqq !P\) to \autoref{def:operators}, and the rule repl.\(_0\) to the LTS.
			\item[\ccsi] (\enquote{\ccs with iteration}) is obtained by adding the production rules \(P \coloneqq P^*\) and \(P \coloneqq P ; Q\) to \autoref{def:operators}, and the rule iter. to the LTS.
		\end{description}
	\end{definition}
	The complete definition of \ccsi is more involved, as the sequential composition \(P ; Q\) requires its own rules and an auxiliary predicate~\cite[pp. 1195--1196]{Busi2009}.
	
	\begin{remark}
		Adopting the rules repl.\(_1\) and repl.\(_2\) instead of repl.\(_0\) does not change the computational nor decisional power, but gives a finitely branching transition system~\cite[Section 4.3.1]{Busi2009}, among other benefits~\cite[pp.~42-43]{Sangiorgi2001}.
		We will use those rules instead of repl.\(_0\) in \ccsr, as commonly done~\cite{Busi2009,Degano2003}.
	\end{remark}
	
	\subsection{\ccsk: A \enquote{Keyed} Reversible Concurrent Calculus}
	\ccsk captures uncontrolled reversibility using two symmetric LTS--one for forward computation, one for backward computation--that manipulates \emph{keys} marking executed prefixes, to guarantee that reverting  synchronizations cannot be done without both parties agreeing.
	We use the syntax of the latest paper on the topic~\cite{Lanese2021}, that slightly differs~\cite[Remark 4.2]{Lanese2021} with the classical definition~\cite{Phillips2006}.
	However, those changes have no impact since we refrain from using \ccsk's newly introduced structural congruence, as discussed in \autoref{rem:congruence}, p.~\pageref{rem:congruence}.
	
	\begin{definition}[Keys, prefixes and \ccsk processes]
		Let \(\keys=\{m,n,\dots\}\) be a set of \emph{keys}, we let \(k\) range over them.
		Prefixes %
		are of the form \(\alpha[k]\)--we call them \emph{keyed labels}--or \(\alpha\).
		\ccsk processes are \ccs processes where the prefix can also be of the form \(\alpha[k]\), we let \(X\), \(Y\) range over them.
	\end{definition}
	
	The forward LTS for \ccsk, that we denote \(\fwlts{k}{\alpha}\), is given in \autoref{fig:ltsrulesccskfw}--with \(\key\) and \(\std\) defined below.
	The reverse LTS \(\bwlts{k}{\alpha}\) is the exact symmetric of \(\fwlts{k}{\alpha}\)~\cite[Figure 2]{Lanese2021} (it can also be read from~\autoref{fig:provedltsrulesccskfw}), and we write \(X \fbwlts{k}{\alpha} Y\) if \(X \bwlts{k}{\alpha} Y\) or \(X \fwlts{k}{\alpha} Y\).
	For all three types of arrows, we sometimes omit the label and keys when they are not relevant, and mark with \(^*\) their transitive closures.
	As usual, we restrict ourselves to reachable processes, defined below.
		
	\begin{definition}[Standard and reachable processes]
		Given a \ccsk process \(X\), \(\key(X)\) is the set of keys occuring in \(X\), and \(X\) is \emph{standard}--\(\std(X)\)--iff \(\key(X) = \emptyset\).
		If there exists a process \(\orig{X}\) \st \(\std(\orig{X})\) and \(\orig{X} \sllinearrow X\), then \(X\) is \emph{reachable}.
	\end{definition}
	
	The reader eager to see this system in action can fast-forward to \autoref{ex1}, p.~\pageref{ex1}, but should be aware that this example uses proved labels, introduced next.

	\begin{figure}[h]
		\begin{tcolorbox}[title = {Action, Prefix and Restriction}]
			\begin{prooftree}
				\hypo{}
				\infer[left label={\(\std(X)\)}]1[act.]{\alpha. X \fwlts{k}{\alpha}  \alpha[k].X}
			\end{prooftree}
			\hfill
			\begin{prooftree}
				\hypo{X \fwlts{k}{\beta} X'}
				\infer[left label={\(k \neq k'\)}]1[pre.]{\alpha[k']. X \fwlts{k}{\beta} \alpha[k'].X'}
			\end{prooftree}
			
			\begin{prooftree}
				\hypo{ X   \fwlts{k}{\alpha}X '}
				\infer[left label={\(a \notin \{\alpha, \out{\alpha}\}\)}]1[res.]{X  \bs a  \fwlts{k}{\alpha}X ' \bs a}
			\end{prooftree}
		\end{tcolorbox}
		
		\begin{tcolorbox}[adjusted title=Parallel Group]
			\begin{prooftree}
				\hypo{X \fwlts{k}{\alpha}X'}
				\infer[left label={\(k \notin \key(Y)\)}]
				1[\(\mid_{\L}\)]{X \mid  Y \fwlts{k}{\alpha} X' \mid Y}
			\end{prooftree}
			\hfill
			\begin{prooftree}
				\hypo{Y \fwlts{k}{\alpha}Y'}
				\infer[left label={\(k \notin \key(X)\)}]
				1[\(\mid_{\R}\)]{X \mid  Y  \fwlts{k}{\alpha}X \mid Y'}
			\end{prooftree}
			\\[1.4em]
			\begin{prooftree}
				\hypo{X \fwlts{k}{\lambda}  X'}
				\hypo{Y \fwlts{k}{\out{\lambda}}  Y'}
				\infer%
				2[syn.]{X \mid Y \fwlts{k}{\tau}  X' \mid Y'}
			\end{prooftree}
		\end{tcolorbox}
		\begin{tcolorbox}[adjusted title=Sum Group]
			\hfill
			\begin{prooftree}
				\hypo{X \fwlts{k}{\alpha} X'}
				\infer[left label={\(\std(Y)\)}]1[\(+_{\L}\)]{X + Y \fwlts{k}{\alpha} X' + Y}
			\end{prooftree}
			\hfill
			\begin{prooftree}
				\hypo{Y \fwlts{k}{\alpha} Y'}
				\infer[left label={\(\std(X)\)}]1[\(+_{\R}\)]{X + Y \fwlts{k}{\alpha} X + Y'}
			\end{prooftree}
			\hfill~
		\end{tcolorbox}
		\caption{Rules of the forward labeled transition system (LTS) for \ccsk}
		\label{fig:ltsrulesccskfw}
	\end{figure}
	
	\section{A New Causal Semantics for \ccsk}
	\label{sec:causal-sem}	
	
	As discussed in the introduction, the only causal semantics for \ccsr we are aware of~\cite{Degano2003} remained unnoticed, despite some interesting qualities: \begin{enumerate*}
		\item it enables the definition of causality for replication while agreeing with pre-existing causal semantics of \ccs and \ccsd~\cite[Theorem 1]{Degano2003}
		\item it leverages the technique of \emph{proved} transition systems that encodes information about the derivation in the labels~\cite{Degano2001},
		\item it was instrumental in one of the first result connecting implicit computational complexity and distributed processes~\cite{Demangeon2018},
		\item last but not least, as we will see below, it allows to define an elegant notion of causality for \ccsk with \enquote{built-in} reversibility, as \emph{the exact same definition will be used for forward and backward transitions}, without making explicit mentions of the keys or directions.
	\end{enumerate*}	
	We comment on existing causal semantics for \ccs-like reversible process calculi before diving into the technical details.

\textbf{In \rccs,} the definition of concurrency fluctuated between a condition on memory inclusion for composable transitions~\cite[Definition 3.11]{Krivine2006} and a condition on disjointness of memories on coinitial transitions~\cite[Definition 7]{Danos2004}, both requiring the entire memory of the thread to label the transitions.
		Concurrency have also been studied using configuration structures for \rccs semantics~\cite{Aubert2015d,Aubert2016jlamp,Aubert2020b}.
		We conjecture (an adaptation of) our definition (to \rccs) to be equivalent to those three definitions, that have not been proven themselves equivalent.
		In any case, whether those definitions conservatively extend \ccs's causality have not been proven, and they do not account for infinite behaviors.
	
	\textbf{In \ccsk,} forward and reverse diamond properties were proven using conditions on keys and \enquote{joinable} transitions \cite[Propositions 5.10 and 5.19]{Phillips2006}, but to our knowledge no \enquote{definitive} definition of concurrency was proposed.
		Notions of concurrency could probably be \enquote{imported} from semantical models of \ccsk~\cite{Graversen2021,Phillips2007,Ulidowski2014}, but have not been to our knowledge.
	
	Those takes on concurrency impacted all the extensions to \rccs and \ccsk, \eg reversible higher-order \(\pi\)-calculus~\cite[Definition 9]{Lanese2016}, reversible \(\pi\)-calculus~\cite[Definition 4.1]{Cristescu2013} or croll-\(\pi\)~\cite[Definition 1]{Lanese2013}, where similar definitions of concurrency were adopted.
	One of the downside of those syntactical definitions is, in our opinion, that too much importance is given to keys or identifiers, that should only be technical annotations disallowing processes that have been synchronized to backtrack independently.
	We have previously defended that identifier should be considered only up to isomorphisms~\cite[p.~13]{Aubert2020b}, or explicitly generated by a built-in mechanism~\cite[p.~152]{Aubert2021d}, and re-inforce this point of view by providing \ccsk with a concurrency mechanism that does not need to consider keys.
		
	We believe our choice is additionally compact, elegant and suited for reversible computation: defining concurrency on composable transitions first allows \emph{not} to consider keys or identifiers in our definition, as the LTS guarantees that the same key will not be re-used.
	\emph{Then}, the loop lemma allows to \enquote{reverse} transitions so that concurrency on coinitial transitions can be defined from concurrency on composable transitions.
	This allows to carry little information in the labels--the direction is not needed--and to have a definition insensible to keys and identifiers for the very modest cost of prefixing labels with some annotation tracing back the thread(s) performing the transition.
	
	\subsection{Proved Labeled Transition System for \ccsk}
	\label{ssec:proved-ccsk}
	
	We adapt the proved transition system~\cite{Carabetta1998,Degano2003,Degano1992} to \ccsk: this technique enriches the transitions label with prefixes that describe parts of their derivation, to keep track of their dependencies or lack thereof.
	We adapt an earlier formalism~\cite{Degano1999} that does not record information about sums~\cite[footnote 2]{Degano2003}, %
	but extend the concurency relation to internal (\ie \(\tau\)-) transitions, omitted from recent work~\cite[Definition 3]{Degano2003} but present in older articles~\cite[Definition 2.3]{Carabetta1998}.
	
	\begin{definition}[Enhanced keyed labels]
		\label{def:enhanced-keyed-labels}
		Let \(\upsilon\) range over strings in \(\{\mid_{\L}, \mid_{\R}%
		\}^*\), \emph{enhanced keyed labels} are defined as
		\[ \theta \coloneqq \upsilon \alpha[k]  ~\|~ \upsilon \langle \mid_{\L} \upsilon_{\L} \alpha[k], \mid_{\R} \upsilon_{\R} \out{\alpha}[k]\rangle \]
		We write \(\enhanced\) the set of enhanced keyed labels, and define \(\ell : \enhanced \to \labels\) and \(\kay: \enhanced \to \keys\): %
		\begin{align*}
			\ell(\upsilon \alpha[k]) & = \alpha &   &   & \ell(\upsilon \langle \mid_{\L} \upsilon_{\L} \alpha[k], \mid_{\R} \upsilon_{\R} \out{\alpha}[k]\rangle) & = \tau \\
			\kay(\upsilon \alpha[k]) & = k      &   &   & \kay(\upsilon \langle \mid_{\L} \upsilon_{\L} \alpha[k], \mid_{\R} \upsilon_{\R} \out{\alpha}[k]\rangle) & = k%
		\end{align*}
	\end{definition}
	
	We present in~\autoref{fig:provedltsrulesccskfw} the rules for the \emph{proved} forward and backward LTS for \ccsk.
	The rules \strut{\(\mid_{\R}\)}, \strut{\(\Rev{\mid_{\R}}\)}, \strut{\(+_{\R}\)} and  \strut{\(\Rev{+_{\R}}\)} are omitted but can easily be inferred.
	
	\begin{figure}[ht]
		\begin{tcolorbox}[title = {Action, Prefix and Restriction}, sidebyside]
			\begin{tcolorbox}[adjusted title=Forward]
				\begin{prooftree}
					\hypo{}
					\infer[left label={\(\std(X)\)}]1[act.]{\alpha. X \fwlts{k}{\alpha}  \alpha[k].X}
				\end{prooftree}
				\\[.9em]
				\begin{prooftree}
					\hypo{X \redl{\theta} X'}
					\infer[left label={\(\kay(\theta) \neq k\)}]1[pre.]{\alpha[k]. X \redl{\theta} \alpha[k].X'}
				\end{prooftree}
				\\[.9em]
				\begin{prooftree}
					\hypo{ X \redl{\theta} X '}
					\infer[left label={\(\ell(\theta) \notin \{\alpha, \out{\alpha}\}\)}]1[res.]{X  \bs a  \redl{\theta} X ' \bs a}
				\end{prooftree}
			\end{tcolorbox}
			\tcblower
			\begin{tcolorbox}[adjusted title=Backward]
				\begin{prooftree}
					\hypo{}
					\infer[left label={\(\std(X)\)}]1[\Rev{\text{act.}}]{ \alpha[k].X \bwlts{k}{\alpha} \alpha. X}
				\end{prooftree}
				\\[.9em]
				\begin{prooftree}
					\hypo{X' \revredl{\theta} X}
					\infer[left label={\(\kay(\theta) \neq k\)}]1[\Rev{\text{pre}.}]{\alpha[k].X'\revredl{\theta} \alpha[k]. X }
				\end{prooftree}
				\\[.9em]
				\begin{prooftree}
					\hypo{ X' \revredl{\theta} X}
					\infer[left label={\(\ell(\theta) \notin \{\alpha, \out{\alpha}\}\)}]1[\Rev{\text{res.}}]{X ' \bs a\revredl{\theta} X  \bs a  }
				\end{prooftree}
			\end{tcolorbox}
		\end{tcolorbox}
		\begin{tcolorbox}[title = Parallel Group, sidebyside]
			\begin{tcolorbox}[adjusted title=Forward]
				\begin{prooftree}
					\hypo{X \redl{\theta} X'}
					\infer[left label={\(\kay(\theta) \notin \key(Y)\)}]
					1[\(\mid_{\L}\)]{X \mid  Y \redl{\mid_{\L}\theta} X' \mid Y}
				\end{prooftree}
				\\[.9em]
				\begin{prooftree}
					\hypo{X \fwlts{k}{\theta_{\L} \lambda}  X'}
					\hypo{Y \fwlts{k}{\theta_{\R} \out{\lambda}}  Y'}
					\infer2[syn.]{X \mid Y \redl{\langle \mid_{\L} \theta_{\L} \lambda \protect{[k]}, \mid_{\R} \theta_{\R} \out{\lambda} \protect{[k]} \rangle} X' \mid Y'}
				\end{prooftree}
			\end{tcolorbox}
			\tcblower
			\begin{tcolorbox}[adjusted title=Backward]
				\begin{prooftree}
					\hypo{X' \revredl{\theta} X}
					\infer[left label={\(\kay(\theta) \notin \key(Y)\)}]
					1[\Rev{\mid_{\L}}]{X' \mid  Y \revredl{\mid_{\L}\theta} X \mid Y}
				\end{prooftree}
				\\[.9em]
				\begin{prooftree}
					\hypo{X' \bwlts{k}{\theta_{\L} \lambda}  X}
					\hypo{Y' \bwlts{k}{\theta_{\R} \out{\lambda}}  Y}
					\infer%
					2[\Rev{\text{syn.}}]{X' \mid Y' \revredl{\langle \mid_{\L} \theta_{\L} \lambda \protect{[k]}, \mid_{\R} \theta_{\R} \out{\lambda} \protect{[k]} \rangle} X \mid Y}
				\end{prooftree}
			\end{tcolorbox}
		\end{tcolorbox}
		
		\begin{tcolorbox}[adjusted title=Sum Group, sidebyside]
			\begin{tcolorbox}[adjusted title=Forward]
				\begin{prooftree}
					\hypo{X \redl{\theta} X'}
					\infer[left label={\(\std(Y)\)}]1[\(+_{\L}\)]{X + Y \redl{%
							\theta} X' + Y}
				\end{prooftree}
			\end{tcolorbox}
			\tcblower
			\begin{tcolorbox}[adjusted title=Backward]
				\begin{prooftree}
					\hypo{X' \revredl{\theta} X}
					\infer[left label={\(\std(Y)\)}]1[\(\Rev{+_{\L}}\)]{X' + Y \revredl{%
							\theta} X + Y}
				\end{prooftree}
			\end{tcolorbox}
		\end{tcolorbox}
		\caption{Rules of the \emph{proved} LTS for \ccsk}
		\label{fig:provedltsrulesccskfw}
	\end{figure}

	\begin{definition}[Dependency relation]%
	\label{def:deprel}
	The \emph{dependency relation} on enhanced keyed labels is induced by the axioms below, for \(d \in \{\L, \R\}\):
	\begin{align*}
	\alpha[k] & \dep \theta &&& 
	\mid_{d}\theta & \dep \mid_{d} \theta' && \text{if \(\theta \dep \theta'\)} &&&
	\langle \theta_{\L}, \theta_{\R} \rangle & \dep \theta && \text{if \(\exists d \text{ \st} \theta_d \dep \theta\)}%
	\\
	&&&&   \theta & \dep \langle \theta_{\L}, \theta_{\R} \rangle &&\text{if \(\exists d \text{ \st} \theta \dep \theta_d\)}
	&&&  \langle \theta_{\L}, \theta_{\R} \rangle &\dep \langle \theta'_{\L}, \theta'_{\R} \rangle && \text{if \(\exists d \text{ \st} \theta_d \dep \theta'_d\)}
	\end{align*}
\end{definition}
	A dependency \(\theta_0 \dep \theta_1\) means \enquote{whenever there is a trace in which \(\theta_0\) occurs before \(\theta_1\), then the two associated transitions are causally related}.
	The following definitions will enable more formal examples, but we can stress that
	\begin{enumerate*}
		\item the first rule enforces the fact that executing or reversing a prefix at top level, \eg \(\alpha.X \fwlts{k}{\alpha} \alpha[k].X\) or \(\alpha[k].X \bwlts{k}{\alpha} \alpha.X\), makes the prefix (\(\alpha[k]\)) a dependency of all further transitions;
		\item as the forward and backward versions of the same rule share the same enhanced keyed labels,
		a trace where a transition and its reverse both occur will have the first occurring be a dependency of the second, as \(\dep\) is reflexive;
		\item whenever reasoning by case over \(\theta\), we can assume \wlg that \eg an enhanced key label \(\mid_{\R} \theta\) was \enquote{ultimately} produced by a transition from a process of the form \(X \mid Y\): while the actual process may have \(+\), \(\bs a\) and \(\alpha[k]\) operators \enquote{wrapping} \(X \mid Y\), they have no impact on the dependencies nor on the properties we will be proving, and thus we can assume   \wlg that a term capable of performing a transition labeled by a prefix (\resp prefixed by \(\mid_{\L}\), \(\mid_{\R}\), \(\langle \theta_{\L}, \theta_{\R}\rangle\)) can always be assumed to have for primary connector the same prefix (\resp a parallel composition). This is made formal in \autoref{app:sub-term}, for the calculus including the replication operator as well, and formally used only in \autoref{lem:bwconc}.
	\end{enumerate*}
	
	\begin{definition}[Transitions and traces]
		In a \emph{transition} \(t: X\frevredl{\theta} X'\), \(X\) is the \emph{source}, and \(X'\) is the \emph{target} of \(t\). %
		Two transitions are \emph{coinitial} (\resp \emph{cofinal}) if they have the same source (\resp  target).
		Transitions \(t_1\) and \(t_2\) are \emph{composable}, \(t_1;t_2\),  if the target of \(t_1\) is the source of \(t_2\).
		The \emph{reverse} of \(t : X' \revredl{\theta} X'\) is \(\Rev{t} : X \redl{\theta} X'\), and similarly if \(t\) is backward, letting \(\Rev{(\Rev{t})} = t\)\footnote{The existence and uniqueness of the reverse transition is immediate in \ccsk. This property, known as the loop lemma (\autoref{lemma:loop}) is sometimes harder to obtain.}.
		
		A sequence of pairwise composable transitions \(t_1; \cdots; t_n\) is called a \emph{trace}, denoted \(T\), and \(\epsilon\) is the empty trace.
		\end{definition}
	
	\begin{definition}[Causality relation]
		\label{def:causal-rel}
		Let \(T\) be a trace \(X_1 \frevredl{\theta_1} \cdots \frevredl{\theta_n} X_n\) and \(i, j \in \{1, \hdots, n\}\) with \(i < j\),  \emph{\(\theta_i\) causes \(\theta_j\) in \(T\)} (\(\theta_i \dep_{T} \theta_j\)) iff \(\theta_i \dep \theta_j\).
	\end{definition}
	
	\begin{definition}[Concurrency]
		Let \(T\) be a trace \(X_1 \frevredl{\theta_1} \cdots \frevredl{\theta_n} X_n\) and \(i, j \in \{1, \hdots, n\}\), %
		 \emph{\(\theta_i\) is concurrent with \(\theta_j\)} (\(\theta_i \conc_T \theta_j\), or simply \(\theta_i \conc \theta_j\)) iff neither 
		\(\theta_i \dep_{T} \theta_j\) nor \(\theta_j \dep_{T} \theta_i\).
	\end{definition}
	
	\begin{example}
		\label{ex1}
		Consider the following trace, dependencies, and concurrent transitions, where the subscripts to \(\dep\) and \(\conc\) have been omitted:
		
		\noindent
		\begin{minipage}{0.42\textwidth}%
			\begin{align*}
			& (a . \out{b}) \mid (b + c)                                                                             \\
			& \redl{\protect{\mid_{\L}a[m]}} a[m] . \out{b} \mid b +c                                                \\
			& \redl{\protect{\mid_{\L}\out{b}[n]}} a[m] . \out{b}[n] \mid b +c                                       \\
			& \redl{\protect{\mid_{\R}c[n']}} a[m] . \out{b}[n] \mid b + c[n']                                      \\
			& \revredl{\protect{\mid_{\L}\out{b}[n]}} a[m] . \out{b} \mid b + c[n']                                  \\
			& \revredl{\protect{\mid_{\R}c[n']}} a[m] . \out{b} \mid b +c                                            \\
			& \redl{\protect{\langle \mid_{\L}\out{b}[n], \mid_{\R}b[n]\rangle}} a[m] . \out{b}[n] \mid b[n] + c 
			\end{align*}
		\end{minipage}%
		\hfill
		\begin{minipage}{0.54\textwidth}%
			\begin{align*}
			\shortintertext{And we have, \eg}
			& \mid_{\L} \out{b}[n] \dep \mid_{\L}\out{b}[n] &   & \text{ as } \out{b}[n] \dep \out{b}[n] \\
			& \mid_{\L} a[m] \dep \mid_{\L} \out{b}[n]      &   & \text{ as } a[m] \dep \out{b}[n]       
			\shortintertext{and, as a consequence,}
			& \mid_{\L}a[m] \dep \langle \mid_{\L}\out{b}[n], \mid_{\R}b[n]\rangle\\
			\shortintertext{but}
			& \mid_{\L} \out{b}[n] \conc \mid_{\R} c[n]
			\shortintertext{as labels prefixed by \(\mid_{\L}\) and \(\mid_{\R}\) are never causes of each others.}
			\end{align*}
		\end{minipage}
	\end{example}
	
	To prove the results in the next section, we need a convoluted but straightforward lemma that decomposes concurrent traces involving two threads into one trace involving one thread while maintaining concurrency, \ie proving that transitions \eg of the form \(T: X \mid Y \frevredl{\mid_{\L}\theta} X' \mid Y \frevredl{\mid_{\L}\theta'} X'' \mid Y\) with \(\mid_{\L}\theta \conc_{T} \mid_{\L}\theta'\) can be decomposed into transitions \(T': X \frevredl{\theta} X' \frevredl{\theta'} X''\) with \(\theta \conc_{T'} \theta'\).
	
	\begin{restatable}[Decomposing concurrent transitions]{lemma}{lempartoftrans}
		\label{lem:part_of_trans}
		Let \(i \in \{1, 2\}\) and \(\theta_i \in \{\mid_{\L} \theta_i', \mid_{\R} \theta_i'', \langle \mid_{\L} \theta_i', \mid_{\R} \theta_i''\rangle\}\), and define \(\pi_{\L} (X_{\L} \mid X_{\R}) = X_{\L}\), \(\pi_{\L} (\mid_{\L} \theta) = \theta\), \(\pi_{\L}(\langle \mid_{\L} \theta_{\L}, \mid_{\R} \theta_{\R}\rangle) = \theta_{\L}\), \(\pi_{\L}(\mid_{\R} \theta) = \text{undefined}\), and define similarly \(\pi_{\R}\).
		
		Whenever \(T: X_{\L} \mid X_{\R} \frevredl{\theta_1} Y_{\L} \mid Y_{\R} \frevredl{\theta_2} Z_{\L} \mid Z_{\R}\) with \(\theta_1 \conc_{T} \theta_2\), then for \(d \in \{\L, \R\}\), if \(\pi_{d}(\theta_1)\) and \(\pi_{d}(\theta_2)\) are both defined, then, \(\pi_{d}(\theta_1) \conc_{\pi_{d}(T)} \pi_{d}(\theta_2)\) with \(\pi_{d}(T) : \pi_{d}(X_{\L} \mid X_{\R}) \frevredl{\pi_{d}(\theta_1)} \pi_{d}(Y_{\L} \mid Y_{\R}) \frevredl{\pi_{d}(\theta_2)}\pi_{d}(Z_{\L} \mid Z_{\R})\).
		
	\end{restatable}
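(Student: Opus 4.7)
The plan is to proceed by an inversion-plus-structural-dependencies argument. First, fix a direction $d$ where both $\pi_d(\theta_1)$ and $\pi_d(\theta_2)$ are defined; without loss of generality take $d = \L$, which rules out the subcases where either $\theta_i$ begins with $\mid_{\R}$. The remaining shapes for each $\theta_i$ are therefore $\mid_{\L}\theta_i'$ or $\langle \mid_{\L}\theta_i', \mid_{\R}\theta_i''\rangle$, giving four combined subcases to handle.

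Next, derive the projected trace by inversion on the proved LTS rules of \autoref{fig:provedltsrulesccskfw}. Thanks to the sub-term observation discussed in item 3 after \autoref{def:deprel} and formalized in \autoref{app:sub-term}, each of the two transitions in $T$ may be assumed to be derived with a parallel composition as its primary connector, so only the $\mid_{\L}$, $\mid_{\R}$, syn. rules (and their reverses) can be the final rule. A label $\mid_{\L}\theta'$ is then forced to come from $\mid_{\L}$ (or $\Rev{\mid_{\L}}$), yielding a premise $X_\L \frevredl{\theta'} Y_\L$ while $X_\R = Y_\R$; a label $\langle \mid_{\L}\theta', \mid_{\R}\theta''\rangle$ is forced to come from syn.\ (or $\Rev{\text{syn.}}$), yielding premises $X_\L \frevredl{\theta'} Y_\L$ and $X_\R \frevredl{\theta''} Y_\R$. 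In all four subcases this produces exactly the composable pair $\pi_\L(X_\L \mid X_\R) \frevredl{\pi_\L(\theta_1)} \pi_\L(Y_\L \mid Y_\R) \frevredl{\pi_\L(\theta_2)} \pi_\L(Z_\L \mid Z_\R)$ demanded by the statement.

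Finally, prove concurrency preservation by contraposition: it suffices to show $\pi_\L(\theta_1) \dep \pi_\L(\theta_2) \Rightarrow \theta_1 \dep \theta_2$ (and symmetrically). This reads off \autoref{def:deprel} subcase by subcase. When both $\theta_i = \mid_{\L}\theta_i'$, $\theta_1' \dep \theta_2'$ lifts to $\mid_{\L}\theta_1' \dep \mid_{\L}\theta_2'$ by the $\mid_d$ axiom. When $\theta_1 = \mid_{\L}\theta_1'$ and $\theta_2 = \langle \mid_{\L}\theta_2', \mid_{\R}\theta_2''\rangle$, one first lifts to $\mid_{\L}\theta_1' \dep \mid_{\L}\theta_2'$, then applies the axiom $\theta \dep \langle \theta_\L, \theta_\R\rangle$ with $d = \L$. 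The symmetric case uses $\langle \theta_\L, \theta_\R\rangle \dep \theta$, and the two-synchronizations case uses $\langle \theta_\L, \theta_\R\rangle \dep \langle \theta'_\L, \theta'_\R\rangle$, both with witness $d = \L$.

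The main obstacle is bureaucratic rather than conceptual: one must keep forward and backward inversions aligned (so that the projected trace is also well-typed as a mixture of $\efs$ and $\ebs$ arrows), and one must check that no pathological lifting route for dependency was missed, in particular that $\mid_{\L}\theta \dep \mid_{\R}\theta'$ never holds for any atomic or compound $\theta, \theta'$ — a fact that follows by a short induction on $\theta$ using \autoref{def:deprel}, and which ensures that the right-hand components of synchronizations cannot secretly create a dependency visible only on the left projection.
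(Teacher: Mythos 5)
Your proposal is correct and follows essentially the same route as the paper's proof: the projected trace is obtained by inversion on the \(\mid_{d}\)/syn.\ rules and their reverses, and concurrency is preserved by arguing contrapositively that a dependency between the projections lifts, via the axioms of \autoref{def:deprel}, to a dependency between \(\theta_1\) and \(\theta_2\), contradicting \(\theta_1 \conc_T \theta_2\). The paper organizes the same case analysis into three cases by symmetry rather than fixing \(d=\L\) up front, but the content is identical.
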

	
	\begin{proof}
		The trace \(\pi_d(T)\) exists by virtue of the rule \(\mid_{d}\), syn.\ or their reverses.
		That \(\pi_{d}(\theta_1) \conc_{\pi_{d}(T)} \pi_{d}(\theta_2)\) essentially comes from the fact than if one was causing the other, then \(\theta_1 \conc_{T} \theta_2\) would not hold, a contradiction.
	\end{proof}

	\subsection{Diamonds and Squares: Concurrency in Action}
	\label{sec:diamond}
	Square properties and concurrency diamonds express that concurrent transitions are \emph{actually} independent, in the sense that they can \enquote{later on} agree (the square property, \autoref{thm:sp}) or be swapped (the forward diamond, \autoref{thm:forward}, if both transitions are forward, or \enquote{sideways} diamond, \autoref{thm:side}, if they have opposite directions).
	That our definition of concurrency enables those, \emph{and} allows to inter-prove them, is a good indication that it is resilient and convenient to use.

	\begin{restatable}[Forward diamond]{theorem}{thmfwdiamond}\label{thm:forward}
		For all \(X \redl{\theta_1} X_1 \redl{\theta_2} Y\) with \(\theta_1 \conc \theta_2\), there exists \(X_2\) \st  \(X \redl{\theta_2} X_2 \redl{\theta_1} Y\).
	\end{restatable}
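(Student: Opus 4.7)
I will prove the statement by structural induction on $X$, handling each operator in turn, with the parallel composition case being the nontrivial one. The key tool is \autoref{lem:part_of_trans}, which lets me reduce two-thread situations to one-thread situations so that induction can fire.

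\textbf{Trivial and by-induction cases.} If $X = \alpha. P$ with $\std(P)$, the only forward transition fires act.\ and yields $\alpha[k].P$; by the first axiom of \autoref{def:deprel}, $\alpha[k] \dep \theta_2$ for \emph{any} subsequent $\theta_2$, contradicting $\theta_1 \conc \theta_2$, so this case is vacuous. If $X = \alpha[k]. P$, both transitions must be derived by pre.\ with side condition $\kay(\cdot)\neq k$, so they are really transitions of $P$; concurrency is preserved since $\dep$ is unchanged by the pre.\ wrapping, and induction on $P$ gives the swap, which the pre.\ rule wraps back up. The cases $X = P \bs a$ (via res.) and $X = P + Q$ (via $+_{\L}$ or $+_{\R}$, noting that the $\std$ side-condition forces both transitions to come from the same summand) are analogous: peel off the outer operator, apply the IH, and re-apply the rule.

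\textbf{Parallel composition.} Suppose $X = X_{\L} \mid X_{\R}$. I do a sub-case analysis on the shapes of $\theta_1, \theta_2 \in \{\mid_{\L}\theta', \mid_{\R}\theta'', \langle \mid_{\L}\theta', \mid_{\R}\theta''\rangle\}$. When both are same-side (say both $\mid_{\L}$), both transitions live inside the left thread: \autoref{lem:part_of_trans} yields a trace $\pi_{\L}(T): X_{\L} \redl{\theta_1'} Y_{\L} \redl{\theta_2'} Z_{\L}$ with $\theta_1' \conc \theta_2'$, so by IH we get $X_{\L} \redl{\theta_2'} X_{\L}' \redl{\theta_1'} Z_{\L}$; re-wrapping with $\mid_{\L}$ (and checking the key side-conditions, which follow from those on the original transitions since $\kay(\theta_i') = \kay(\theta_i)$) closes the square. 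When the two sides are disjoint (e.g.\ $\theta_1 = \mid_{\L}\theta_1'$ and $\theta_2 = \mid_{\R}\theta_2''$), the two transitions act on disjoint threads; the keys are distinct (by the freshness side-conditions of $\mid_{\L}/\mid_{\R}$), so the rules can simply be applied in the opposite order to reach $Y$. The mixed case where one transition is a synchronization (e.g.\ $\theta_2 = \langle \mid_{\L}\theta_2', \mid_{\R}\theta_2''\rangle$) and the other is one-sided ($\theta_1 = \mid_{\L}\theta_1'$) combines both ideas: project via $\pi_{\L}$ to get concurrent transitions $\theta_1' \conc \theta_2'$ inside $X_{\L}$ (concurrency on the projected trace follows from \autoref{lem:part_of_trans}, since a dependency among projections would lift to one among the originals), swap by IH in the left thread, leave the right thread unchanged, and recombine with syn.\ for the now-second transition. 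Finally the two-synchronizations case applies the IH on both projections $\pi_{\L}$ and $\pi_{\R}$ simultaneously.

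\textbf{Where the work lies.} The conceptually clean step is the reduction afforded by \autoref{lem:part_of_trans}; the bookkeeping burden lies in verifying, in each mixed sub-case of parallel composition, that the side-conditions $\kay(\theta) \notin \key(\cdot)$ of the LTS rules remain satisfied after swapping. This is where I expect the main obstacle: for instance, when rebuilding a synchronization after a one-sided transition has been pushed ahead of it, one must check that the freshness of its key relative to the intermediate state is preserved. This follows from the observation that keys attached by the original transitions in $T$ are the same keys used in the swapped trace (since swapping does not invent new prefixes), and from the fact that $\kay$ is invariant under the projections of \autoref{lem:part_of_trans}, so all freshness conditions transfer mechanically.
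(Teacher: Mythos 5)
Your proof is correct and follows essentially the same route as the paper's: an induction that peels off the outermost operator (the paper phrases it as induction on the length of the derivation of the first transition), with the base case vacuous because \(\alpha[k] \dep \theta_2\) always holds, and with \autoref{lem:part_of_trans} doing the work in the parallel and synchronization cases exactly as you use it. The only cosmetic difference is the \(\alpha[k].P\) case, where you strip the prefix directly using the side condition of pre., whereas the paper formalizes that same step via the label-removal function \(\rem^{\alpha}_{k}\) and \autoref{lem:rm}.
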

	
	The proof, p.~\pageref{proof:thm:forw}, requires a particular care when \(X\) is not standard.
	Using pre.\ is transparent from the perspective of enhanced keyed labels, as no \enquote{memory} of its usage is stored in the label of the transition.
	This is essentially because--exactly like for act.\--all the dependency information is already present in the term or its enhanced keyed label.
	To make this more formal, we introduce a function that \enquote{removes} a keyed label, and prove that it does not affect derivability. %
	
	\begin{definition}
		Given \(\alpha\) and \(k\), we define \(\rem_{\alpha[k]}\) by \(\rem_{\alpha[k]} (0)  = 0\) and 
		\begin{align*}
			\rem_{\alpha[k]} (\beta.X) & = \beta.X                   &   &   & \rem_{\alpha[k]}(X \mid Y) & = \rem_{\alpha[k]}(X) \mid \rem_{\alpha[k]}(Y) \\
			\rem_{\alpha[k]} (X\bs a)  & = 	(\rem_{\alpha[k]}X)\bs a &   &   & \rem_{\alpha[k]}(X + Y)    & = \rem_{\alpha[k]}(X) + \rem_{\alpha[k]}(Y)    \\
			\shortintertext{\centering \(\rem_{\alpha[k]} (\beta[m].X) =\begin{dcases*} X & if \(\alpha = \beta\) and \(k = m\) \\ \beta[m].\rem_{\alpha[k]}(X) & otherwise \end{dcases*}\)}
	\end{align*}
	We let \(\rem_{k}^{\lambda} = \rem_{\lambda[k]} \circ \rem_{\out{\lambda}[k]}\) if \(\lambda \in \labels \bs \{\tau\}\), \(\rem_{k}^{\tau} = \rem_{\tau[k]}\) otherwise.
\end{definition}

The function \(\rem_{\alpha[k]}\) simply looks for an occurrence of \(\alpha[k]\) and removes it: as there is at most one, there is no need for a recursive call when it is found.
This function preserves derivability of transitions that do not involve the key removed:

\begin{lemma}
	\label{lem:rm}
	For all \(X\), \(\alpha\) and \(k\), \(X \frevredl{\theta} Y\) with \(\kay(\theta) \neq k\) iff \(\rem_{k}^{\alpha}(X) \frevredl{\theta} \rem_{k}^{\alpha}(Y)\).	
\end{lemma}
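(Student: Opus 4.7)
The plan is to prove both directions of the biconditional by induction, exploiting the fact that \(\rem_{k}^{\alpha}\) distributes homomorphically over every operator except on a keyed prefix matching the one being removed. Along the way the side conditions on the proved LTS rules (such as \(\kay(\theta) \notin \key(Y)\) or \(\std(X)\)) carry over under \(\rem_{k}^{\alpha}\), since removing a prefix can only shrink the set of keys appearing in a process.

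For the forward direction I would induct on the derivation of \(X \frevredl{\theta} Y\). In the base cases act.\ and \(\Rev{\text{act.}}\), the hypothesis \(\kay(\theta) \neq k\) together with the standardness side condition ensures that \(\rem_{k}^{\alpha}\) has no occurrence to consume in either the source or the target, so the same axiom applies. In the inductive cases res., \(\mid_{d}\), \(+_{d}\), syn.\ (and their reverses), \(\rem_{k}^{\alpha}\) commutes with the primary connector and the inductive hypothesis supplies the required subderivation. The only interesting case is pre./\(\Rev{\text{pre.}}\), whose outermost prefix \(\beta[k']\) either differs from \(\alpha[k]\) — in which case \(\rem_{k}^{\alpha}\) slides past it and we reapply pre.\ — or equals \(\alpha[k]\), in which case \(\rem_{k}^{\alpha}\) strips the prefix on both source and target, turning the premise \(X \frevredl{\theta} X'\) directly into the desired conclusion without any pre.\ step.

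For the converse direction I would lift a derivation from \(\rem_{k}^{\alpha}(X) \frevredl{\theta} \rem_{k}^{\alpha}(Y)\) back to a derivation from \(X\) to \(Y\). The main obstacle is that the active subterm of \(X\) may sit under the very prefix \(\alpha[k]\) that \(\rem_{k}^{\alpha}\) has erased, so the derivation on the image is missing one pre.\ step with respect to the one sought on \(X\). The remedy is a structural induction on \(X\): when \(X = \alpha[k].X'\), any transition from \(X\) must preserve the outer \(\alpha[k]\) since reversing it would require key \(k\), forcing \(Y = \alpha[k].Y'\); then \(\rem_{k}^{\alpha}(X) = \rem_{k}^{\alpha}(X')\) and \(\rem_{k}^{\alpha}(Y) = \rem_{k}^{\alpha}(Y')\), the inductive hypothesis yields \(X' \frevredl{\theta} Y'\), and one application of pre.\ — with side condition \(\kay(\theta) \neq k\) supplied by the hypothesis — recovers \(X \frevredl{\theta} Y\); for composite \(X\), the last rule of the derivation on the image determines the shape of \(Y\) uniquely and the inductive hypothesis recombines. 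The fact that keys are linear in reachable \ccsk processes guarantees that at most one occurrence of \(\alpha[k]\) needs to be tracked, so the reconstruction of the derivation is unambiguous.
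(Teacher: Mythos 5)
Your proof is correct and follows essentially the same route as the paper's: an induction on the derivation whose key observation is that pre.\ and its reverse leave the enhanced keyed label unchanged, so the rule applications consuming \(\alpha[k]\) (or \(\out{\alpha}[k]\)) can be deleted for the left-to-right direction and re-inserted at the same place for the converse, with the linearity of keys in reachable processes guaranteeing that \(k\) is fresh in the images. Your write-up is merely more explicit about the case analysis and the reconstruction step; no genuinely different idea is involved.
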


\begin{proof}
	Assume \(\alpha[k]\) or \(\out{\alpha}[k]\) (if \(\alpha \neq \tau\)) occur in \(X\) (otherwise the result is straightforward), as \(\kay(\theta) \neq k\), the same holds for \(Y\).
	As keys occur at most twice, attached to complementary names, in reachable processes~\cite[Lemma 3.4]{Lanese2021}, \(k \notin \key(\rem_{k}^{\alpha}(X)) \cup \key(\rem_{k}^{\alpha}(Y))\).
	Then the proof follows from simple induction on the length of the derivation for \(X \frevredl{\theta} Y\): as neither pre.\ nor \Rev{\text{pre.}}\ change the enhanced keyed label, we can simply \enquote{take out} the occurrences of those rules when they concern \(\alpha[k]\) or \(\out{\alpha}[k]\) and still obtain a valid derivation, with the same enhanced keyed label, hence yielding \(\rem_{k}^{\alpha}(X) \frevredl{\theta} \rem_{k}^{\alpha}(Y)\).
	For the converse direction, pre.\ or \Rev{\text{pre.}} can be re-added to the derivation tree and in the same location, as \(k\) is fresh in  \(\rem_{k}^{\alpha}(X)\) and \(\rem_{k}^{\alpha}(Y)\).
\end{proof}

\begin{proof}[of \protect{\autoref{thm:forward}}]
	\label{proof:thm:forw}
	The proof proceeds by induction on the length of the deduction for the derivation for \(X_1 \redl{\theta_1} X_2\)%
	, using \autoref{lem:part_of_trans} to enable the induction hypothesis if \(\theta_1 = \mid_{\L}\theta_1'\), \(\mid_{\R}\theta_1'\) or \(\langle \mid_{\L}\theta_{\L}', \mid_{\R}\theta_{\R}'\rangle\).
	The only delicate case is if the last rule is pre.: in this case, there exists \(\alpha\), \(k\), \(X'\) and \(X_1'\) \st  \(X = \alpha[k].X' \redl{\theta_1} \alpha[k].X_1' = X_1\) and \(\kay(\theta_1) \neq k\).
	As \(\alpha[k].X_1' \redl{\theta_2} Y\), \(\kay(\theta_2) \neq k\)~\cite[Lemma 3.4]{Lanese2021}, and since \(\theta_1 \conc \theta_2\), we apply \autoref{lem:rm} twice to obtain the trace \(T\):
	\[\rem^{\alpha}_k (\alpha[k].X') = X' \redl{\theta_1} \rem^{\alpha}_k (\alpha[k].X'_1) = X'_1 \redl{\theta_2} \rem^{\alpha}_k (Y)\]
	with \(\theta_1 \conc_{T} \theta_2\), and we can use the induction hypothesis to obtain \(X_2\) \st  \(X' \redl{\theta_2} X_2 \redl{\theta_1} \rem^{\alpha}_k (Y)\).
	Since \(\kay(\theta_2) \neq k\), we can append pre.\ to the derivation of \(X' \redl{\theta_2} X_2\) to obtain \(\alpha[k] . X' = X \redl{\theta_2} \alpha[k]. X_2\).
	Using \autoref{lem:rm} one last time, we obtain that \(\rem^{\alpha}_k(\alpha[k].X_2) = X_2 \redl{\theta_1} \rem^{\alpha}_k(Y)\) implies \(\alpha[k].X_2 \redl{\theta_1} Y\), which concludes this case.
\end{proof}

\begin{example}
	Re-using \autoref{ex1}, since \(\mid_{\L} \out{b}[n] \conc \mid_{\R}c[n']\) in 	
	\begin{align*}
	a[m] . \out{b} \mid b +c & \redl{\protect{\mid_{\L}\out{b}[n]}} a[m] . \out{b}[n] \mid b +c  \redl{\protect{\mid_{\R}c[n']}} a[m] . \out{b}[n] \mid b +c[n']\text{,}\\
	\shortintertext{\autoref{thm:forward} allows to re-arrange this trace as}
	a[m] . \out{b} \mid b +c & \redl{\protect{\mid_{\R}c[n']}} a[m] . \out{b} \mid b + c[n']  \redl{\protect{\mid_{\L}\out{b}[n]}} a[m] . \out{b}[n] \mid b +c[n']\text{.}
	\end{align*}
\end{example}

\begin{restatable}[Sideways diamond]{theorem}{thmsidediamond}\label{thm:side}
	For all \(X \redl{\theta_1} X_1 \revredl{\theta_2} Y\) with \(\theta_1 \conc \theta_2\), there exists \(X_2\) \st  \(X \revredl{\theta_2} X_2 \redl{\theta_1} Y\).
\end{restatable}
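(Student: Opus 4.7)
The plan is to mirror the proof of \autoref{thm:forward} almost verbatim, inducting on the length of the derivation of the forward transition $X \redl{\theta_1} X_1$ and exploiting the perfect symmetry between forward and backward rules in \autoref{fig:provedltsrulesccskfw}: the backward step $X_1 \revredl{\theta_2} Y$ decomposes structurally exactly like a forward step would, but through the ``Rev.''\ companion of each rule. The witness $X_2$ is built by firing the reverse of $\theta_2$'s last rule first, then re-applying the forward rule associated to $\theta_1$ on top of the induction hypothesis.

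For act., the enhanced keyed label $\theta_1 = \alpha[k]$ is bare, so the first axiom of \autoref{def:deprel} immediately gives $\alpha[k] \dep \theta_2$, contradicting $\theta_1 \conc \theta_2$; the case is vacuous. For res., $+_{\L}$ and $+_{\R}$, the shape of $X$ forces the backward step to match via $\Rev{\text{res.}}$, $\Rev{+_{\L}}$ or $\Rev{+_{\R}}$---the $\std$ side conditions on untouched summands being inherited---so a direct invocation of the induction hypothesis on the inner transitions followed by one application of each rule closes the square.

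For the parallel block I split on the outer tag of $\theta_2$. When $\theta_1$ and $\theta_2$ share the same outer $\mid_{d}$, or when both overlap with a synchronization, \autoref{lem:part_of_trans} projects the forward-then-backward pair onto the relevant thread(s) while preserving concurrency; the induction hypothesis yields the swapped inner trace, and the corresponding parallel or synchronization rule (and its reverse) rebuild $X_2$. When the outer tags are disjoint (e.g.\ $\theta_1 = \mid_{\L}\theta_1'$ versus $\theta_2 = \mid_{\R}\theta_2'$, or a syn.\ faced with an independent action lying entirely on one side), the two transitions touch disjoint threads and commute trivially by firing $\Rev{\mid_{d'}}$ first; the key-freshness side conditions $\kay(\theta_i) \notin \key(\cdot)$ transfer verbatim since keys introduced or consumed by each transition stay within its own thread.

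The main obstacle is, as in \autoref{thm:forward}, the pre.\ case. Here $X = \alpha[k'].X'$ and the top-level backward step cannot be $\Rev{\text{act.}}$: that would force $\theta_2 = \alpha[k']$ bare, yielding $\theta_2 \dep \theta_1$ via the first axiom of \autoref{def:deprel}---a contradiction with concurrency. Hence $\theta_2$ passes through $\Rev{\text{pre.}}$ and both transitions satisfy $\kay(\theta_i) \neq k'$. Apply \autoref{lem:rm} twice to strip $\alpha[k']$, obtaining $X' \redl{\theta_1} X_1' \revredl{\theta_2} Y'$; the induction hypothesis produces $X_2'$ with $X' \revredl{\theta_2} X_2' \redl{\theta_1} Y'$, and re-applying $\Rev{\text{pre.}}$ followed by pre.\ delivers the desired $X \revredl{\theta_2} \alpha[k'].X_2' \redl{\theta_1} Y$, the side conditions being preserved by the same bookkeeping that served the forward diamond.
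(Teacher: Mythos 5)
Your overall strategy is exactly the paper's: recycle the forward-diamond induction, observe that Lemmas~\ref{lem:part_of_trans} and \ref{lem:rm} hold in both directions, and isolate pre.\ as the only case needing genuinely new work. The gap is precisely in how you dismiss the sub-case where the backward step undoes the guarding prefix. You argue that a top-level \(\Rev{\text{act.}}\) would force \(\theta_2 = \alpha[k']\) and that the first axiom of \autoref{def:deprel} then gives \(\theta_2 \dep \theta_1\), contradicting concurrency. That contradiction does not follow: by \autoref{def:causal-rel} causality in a trace is directional (\(\theta_i \dep_{T} \theta_j\) is only defined for \(i<j\)), so \(\theta_1 \conc \theta_2\) in \(X \redl{\theta_1} X_1 \revredl{\theta_2} Y\) only excludes \(\theta_1 \dep \theta_2\), i.e.\ \(\theta_1 \dep \alpha[k']\). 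The first axiom yields the converse \(\alpha[k'] \dep \theta_1\), which concurrency does not forbid, and \(\theta_1 \dep \alpha[k']\) genuinely fails in the non-vacuous cases (\eg \(\theta_1 = \mid_{\L}\beta[m]\)), so no contradiction is available from the labels alone.

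The claim you need is nonetheless true, but for an operational reason rather than a causal one: undoing \(\alpha[k']\) in \(\alpha[k'].X_1'\) is an instance of \(\Rev{\text{act.}}\), whose side condition is \(\std(X_1')\); since \(X_1'\) is the result of the forward step \(X' \redl{\theta_1} X_1'\), it contains the key \(\kay(\theta_1)\), is not standard, and the rule cannot fire. Hence the backward derivation must end with \(\Rev{\text{pre.}}\), \(\kay(\theta_2)\neq k'\) follows, and the rest of your bookkeeping with \autoref{lem:rm} goes through as in the paper.
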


It should be noted that in the particular case of \(t;\Rev{t}: X \redl{\theta_1} X_1 \revredl{\theta_1} X\), \(\theta_1 \dep \theta_1\) by reflexivity of \(\dep\) and hence the sideways diamond cannot apply.

\begin{proof}
	We can re-use the proof of \autoref{thm:forward} almost as it is: Lemmas \ref{lem:part_of_trans} and  \ref{lem:rm} hold for both directions, so we can use them for backward transitions.	
	The only case that slightly diverges is if the deduction for \(X \redl{\theta_1} X_1\) have for last rule pre. 
	In this case, \(\alpha[k].X' \redl{\theta_1} \alpha[k].X'_1 \revredl{\theta_2} Y\), but we cannot deduce that \(\kay(\theta_2) \neq k\) immediately.
	However, if \(\kay(\theta_2) = k \), then we would have \(\alpha[k].X_1' \revredl{\protect{\alpha[k]}} \alpha.Y\), but this application of \Rev{\text{pre.}} is not valid, as \(\std(X_1')\) does not hold, since \(X_1'\) was obtained from \(X'\) after it made a \emph{forward} transition.
	Hence, we obtain that  \(\key(\theta_2) \neq k\) and we can carry out the rest of the proof as before.
\end{proof}

\begin{example}
	Re-using \autoref{ex1}, since \(\mid_{\R}c[n'] \conc \mid_{\L}\out{b}[n]\) in 	
	\begin{align*}
	a[m] . \out{b}[n] \mid b +c & \redl{\protect{\mid_{\R}c[n']}} a[m] . \out{b}[n] \mid b + c[n'] \revredl{\protect{\mid_{\L}\out{b}[n]}} a[m] . \out{b} \mid b + c[n'] \text{,}\\
	\shortintertext{\autoref{thm:side} allows to re-arrange this trace as}
	a[m] . \out{b}[n] \mid b + c &\revredl{\protect{\mid_{\L}\out{b}[n]}} a[m] . \out{b} \mid b +c \redl{\protect{\mid_{\R}c[n']}} a[m] . \out{b} \mid b + c[n']\text{.}
	\end{align*}
\end{example}

Concurrency on coinitial traces is defined using concurrency on composable traces and the loop lemma, immediate in \ccsk. %

\begin{lemma}[Loop lemma~{\cite[Prop. 5.1]{Phillips2006}}]
	\label{lemma:loop}
	For all \(t: X \redl{\theta} X'\), there exists a unique \(\Rev{t}: X' \revredl{\theta} X\), and reciprocally.
\end{lemma}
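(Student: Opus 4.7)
The plan is to proceed by induction on the height of the derivation of the given transition, exploiting the fact that the forward and backward rules of \autoref{fig:provedltsrulesccskfw} are set up as exact syntactic mirrors of one another, carrying the same enhanced keyed label on the conclusion of paired rules. The two directions of the biconditional are symmetric, so I would explicitly treat only the forward-to-backward direction; the reciprocal follows by the same argument applied to the reverse LTS.

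For the forward-to-backward direction, given \(t : X \redl{\theta} X'\), I would induct on the height of the derivation. In the base case the last rule is necessarily act., so \(X = \alpha.X_0\) with \(\std(X_0)\) and \(X' = \alpha[k].X_0\); the mirror rule \(\Rev{\text{act.}}\) has exactly the same side condition \(\std(X_0)\) and produces \(\alpha[k].X_0 \bwlts{k}{\alpha} \alpha.X_0\), as required. For the inductive step, each of the forward rules pre., res., \(\mid_d\), syn., \(+_d\) has a mirror backward rule \(\Rev{\text{pre.}}\), \(\Rev{\text{res.}}\), \(\Rev{\mid_d}\), \(\Rev{\text{syn.}}\), \(\Rev{+_d}\) with the same enhanced keyed label on its conclusion. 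Applying the induction hypothesis to the premise(s) yields the reversed sub-derivation(s), which feed directly into the mirror rule; one only has to check that every side condition transfers. This check is routine: \(\kay(\theta) \neq k\) depends only on \(\theta\); the restriction side condition depends only on \(\ell(\theta)\); the freshness constraint \(\kay(\theta) \notin \key(Y)\) in \(\mid_d\) is identical on both sides (swapping \(Y\) and \(Y'\) if necessary, but the keys in the inert operand are unchanged); and the \(\std(Y)\) conditions of \(+_d\) concern the summand that is not modified by the transition, hence are preserved.

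For uniqueness, I would argue that the enhanced keyed label \(\theta\) is syntactically discriminating in the sense of \autoref{def:enhanced-keyed-labels}: its outermost shape—prefix in \(\{\mid_{\L}, \mid_{\R}\}^*\) followed by \(\alpha[k]\), versus synchronization pair \(\langle \mid_{\L}\theta_{\L}, \mid_{\R}\theta_{\R}\rangle\)—determines which root rule of the backward LTS can have produced \(X' \revredl{\theta} X\), and iterating this observation inside the premise(s) determines the whole derivation tree. Combined with the fact that the source process \(X'\) determines which sub-terms the rule acts on, this forces both the reverse derivation and the target \(X\) to be unique.

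No genuine obstacle is expected: the proof is purely bookkeeping, enabled by the deliberate mirror design of the proved LTS and by the fact that enhanced keyed labels already encode all the structural information (which side of a parallel composition, which synchronization partners) that is usually stored in a separate memory in calculi such as \rccs. The one point worth double-checking, and which motivates the slightly atypical formulation of act.\ and \(\Rev{\text{act.}}\), is that the standardness side condition \(\std(X)\) is the \emph{same} sub-formula on both sides of each paired rule, so that reversibility of act.\ is exact rather than up to reachability.
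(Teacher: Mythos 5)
Your proof is correct and matches the paper's treatment: the paper declares the loop lemma \enquote{immediate in \ccsk} precisely because the forward and backward proved LTSs are exact syntactic mirrors carrying the same enhanced keyed labels, which is the symmetry your induction spells out. The only detail worth keeping explicit is the one you already flag in your uniqueness argument--the label-transparent rules (pre., res., \(+_d\)) are disambiguated by the source process rather than by \(\theta\) alone.
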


\begin{definition}[Coinitial concurrency]
	\label{def:co-init-conc}
	Let \(t_1: X \frevredl{\theta_1} Y_1\) and \(t_2: X \frevredl{\theta_2} Y_2\) be two coinitial transitions, \(\theta_1\) is concurrent with \(\theta_2\) (\(\theta_1 \conc \theta_2\)) iff \(\theta_1 \conc \theta_2\) in the trace \(\Rev{t_1} ; t_2 : Y_1 \frevredl{\theta_1} X \frevredl{\theta_2} Y_2\). 
\end{definition}

\begin{theorem}[Square property]
	\label{thm:sp}
	For all \(t_1 :X\frevredl{\theta_1}X_1\) and \(t_2:X\frevredl{\theta_2}X_2\) with \(\theta_1 \conc \theta_2\), there exist \(t'_1 :X_1\frevredl{\theta_2}Y\) and \(t'_2 :X_2\frevredl{\theta_1}Y\).
\end{theorem}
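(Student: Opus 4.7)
The plan is to proceed by case analysis on the forward/backward directions of $t_1$ and $t_2$, using throughout that concurrency $\theta_1 \conc \theta_2$ is defined purely on enhanced keyed labels (via \autoref{def:deprel}) and therefore transfers to any composable trace that contains both labels.

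If $t_1$ and $t_2$ have \emph{opposite directions}, say $t_1$ forward and $t_2$ backward (the other orientation being symmetric), I apply the loop lemma (\autoref{lemma:loop}) to reverse $t_2$ into $X_2 \redl{\theta_2} X$, producing the composable forward-forward trace $X_2 \redl{\theta_2} X \redl{\theta_1} X_1$, on which the forward diamond (\autoref{thm:forward}) applies. This yields $Y$ with $X_2 \redl{\theta_1} Y \redl{\theta_2} X_1$: the first transition serves as $t_2'$, and the loop-lemma reverse of the second supplies $t_1' : X_1 \revredl{\theta_2} Y$.

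If both transitions are \emph{backward}, reversing $t_1$ via the loop lemma gives the composable forward-then-backward trace $X_1 \redl{\theta_1} X \revredl{\theta_2} X_2$, on which the sideways diamond (\autoref{thm:side}) applies directly. It produces $Y$ with $X_1 \revredl{\theta_2} Y \redl{\theta_1} X_2$, so that $t_1'$ is the first of these two transitions and $t_2'$ is the loop-lemma reverse of the second.

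The hard part will be the \emph{both-forward} case. Reversing either transition produces the backward-then-forward composable trace $X_1 \revredl{\theta_1} X \redl{\theta_2} X_2$, a pattern that the sideways diamond as stated does not cover: it rearranges forward-then-backward into backward-then-forward, not the converse. I expect to close this case by establishing the missing ``reverse sideways'' rearrangement, proven by induction on the derivation of $t_2$ in the spirit of \autoref{thm:forward}'s proof, reusing \autoref{lem:part_of_trans} and \autoref{lem:rm} (both of which hold for transitions of either direction). Once the enabling transition $X_1 \redl{\theta_2} Y$ is produced, the forward diamond applied to the composable trace $X \redl{\theta_1} X_1 \redl{\theta_2} Y$, together with determinism of the proved LTS under a fixed enhanced keyed label, pins down $X_2 \redl{\theta_1} Y$ and completes the square.
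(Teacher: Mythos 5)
Your overall route is the paper's: turn the coinitial pair into the composable trace \(\Rev{t_1};t_2\) via \autoref{def:co-init-conc} and the loop lemma, permute it with a diamond, and reverse the second leg back. Your opposite-direction and both-backward cases are handled exactly as the paper intends, and the reduction is sound because \(\conc_T\) depends only on the two enhanced keyed labels.

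Your diagnosis of the both-forward case is correct, and it is in fact sharper than the paper's own proof, which dispatches it with \enquote{possibly using the loop lemma}: that does not suffice, since reversing the backward-then-forward trace \(X_1 \revredl{\theta_1} X \redl{\theta_2} X_2\) (or swapping the roles of \(t_1\) and \(t_2\)) only produces another backward-then-forward trace, so neither \autoref{thm:forward} nor \autoref{thm:side} applies and the \enquote{reverse sideways} rearrangement you name really is a missing ingredient, provable by the same induction as \autoref{thm:side}. The weak point is how you propose to finish once the enabling transition \(X_1 \redl{\theta_2} Y\) is in hand: the appeal to \enquote{determinism of the proved LTS under a fixed enhanced keyed label} is unsound, because sums are deliberately not recorded in enhanced keyed labels (see the discussion preceding \autoref{def:enhanced-keyed-labels}), so a process such as \((a \mid b) + (a \mid b)\) has two distinct \(\mid_{\L}a[k]\)-successors, one through \(+_{\L}\) and one through \(+_{\R}\). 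You do not need determinism at all: state the reverse sideways diamond with matching endpoints, exactly as Theorems~\ref{thm:forward} and~\ref{thm:side} are stated (from \(X_1 \revredl{\theta_1} X \redl{\theta_2} X_2\) with \(\theta_1 \conc \theta_2\), obtain \(X_1 \redl{\theta_2} Y \revredl{\theta_1} X_2\)); the square then follows by reversing its second transition, with no detour through the forward diamond.
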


\begin{proof}
	Since \(t_1\) and \(t_2\) are concurrent, by \autoref{def:co-init-conc} we have that \(\theta_1 \conc \theta_2\) in \(\Rev{t_1};t_2 :X_1\frevredl{\theta_1}X \frevredl{\theta_2} X_2\).
	Hence, depending on the direction of the arrows, and possibly using the loop lemma, we obtain by the forward or sideways diamond (Theorems~\ref{thm:forward} and \ref{thm:side}) \(t''_1;t''_2 : X_1  \frevredl{\theta_2} Y \frevredl{\theta_1} X_2\), and we let \(t'_1 = t''_1\) and \(t'_2 = \Rev{{t''}_2}\):

	\begin{tikzpicture}[anchor=base, baseline=-1.2cm]
	\node (X) {\(X\)};
	\node (X1) [below left = .8cm and .4cm of X] {\(X_1\)};	
	\draw [arrows={- angle 45}, ->>] (X)--node[left]{\(\theta_1\)} (X1);
	\node (X2) [below right = .8cm and .4cm of X] {\(X_2\)};	
	\draw [arrows={- angle 45}, ->>] (X)--node[right]{\(\theta_2\)} (X2);
	\end{tikzpicture}
	\(\xRightarrow{\text{Loop}}\)
	\begin{tikzpicture}[anchor=base, baseline]
	\node (X) {\(X\)};
	\node (X1) [above = .8cm of X] {\(X_1\)};	
	\draw [arrows={- angle 45}, ->>] (X1)--node[right]{\(\theta_1\)} (X);
	\node (X2) [below = .8cm of X] {\(X_2\)};	
	\draw [arrows={- angle 45}, ->>] (X)--node[right]{\(\theta_2\)} (X2);
	\end{tikzpicture}
	\(\xRightarrow{\text{Diamonds}}\)
	\begin{tikzpicture}[anchor=base, baseline]
	\node (X) {\(Y\)};
	\node (X1) [above = .8cm of X] {\(X_1\)};	
	\draw [arrows={- angle 45}, ->>] (X1)--node[right]{\(\theta_2\)} (X);
	\node (X2) [below = .8cm of X] {\(X_2\)};	
	\draw [arrows={- angle 45}, ->>] (X)--node[right]{\(\theta_1\)} (X2);
	\end{tikzpicture}
	\(\xRightarrow{\text{Loop}}\)
	\begin{tikzpicture}[anchor=base, baseline=-1.2cm]
	\node (X) {\(X\)};
	\node (X1) [below left = .8cm and .4cm of X] {\(X_1\)};	
	\draw [arrows={- angle 45}, ->>] (X)--node[left]{\(\theta_1\)} (X1);
	\node (X2) [below right = .8cm and .4cm of X] {\(X_2\)};	
	\draw [arrows={- angle 45}, ->>] (X)--node[right]{\(\theta_2\)} (X2);
	\node (Y) [below = 2.2cm of X] {\(Y\)};	
	\draw [arrows={- angle 45}, ->>] (X1)--node[left]{\(\theta_2\)} (Y);
	\draw [arrows={- angle 45}, ->>] (X2)--node[right]{\(\theta_1\)} (Y);
	\end{tikzpicture}	
\end{proof}

\begin{example}
	Following \autoref{ex1}, we can get \eg from \(a[m] . \out{b}[n] \mid b +c \redl{\protect{\mid_{\R}b[n']}} a[m] . \out{b}[n] \mid b[n'] +c \) and \(a[m] . \out{b}[n] \mid b +c  \revredl{\protect{\mid_{\L}\out{b}[n]}} a[m] . \out{b} \mid b +c\) the transitions converging to \(a[m] . \out{b} \mid b[n'] +c\).
\end{example}

\subsection{Sanity Checks: Causal Consistency, Optimality and More}
\label{sec:sanity}

Causality for a semantics of concurrent computations should satisfy a variety of critera, the squares and diamonds only being the starting point generally.
Optimality with respect to the reduction semantics was discussed for the \(\pi\)-calculus~\cite[Theorem 5.6]{Cristescu2013}, and later refined along other criteria~\cite{Cristescu2015b}, for notions of concurrency defined respectively in terms of memory inclusion and rigid families.
For reversible systems in particular, other criteria have emerged, along which causal consistency--arguably the most important--that guarantees that transitions cannot be undone unless their consequences have been too.

\subsubsection{Causal Consistency}

Formally, causal consistency states that any two coinitial and cofinal traces are causally equivalent:

\begin{definition}[Causally equivalence]
	Two traces \(T_1\), \(T_2\) are \emph{causally equivalent}, if they are in the least equivalence relation closed by composition satisfying \(t; \Rev{t} \eq \epsilon\)  and \(t_1;t_2' \eq t_2;t_1'\) for any
	\(t_1 ; t_2' : X \frevredl{\theta_1} \frevredl{\theta_2} Y\), \(t_2 ; t_1' : X \frevredl{\theta_2}\frevredl{\theta_1} Y\). 
\end{definition}

\begin{theorem}[Causal consistency]
	\label{thm:cc}
	All coinitial and cofinal traces are causally equivalent.
\end{theorem}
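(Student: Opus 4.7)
The plan is to follow the classical Danos--Krivine strategy and reduce to a normal form. I would first define a trace to be \emph{parabolic} if it has the shape \(T_{\text{bw}}; T_{\text{fw}}\), i.e.\ all backward transitions occur before all forward ones. The proof will then have two main stages: (1) every trace is causally equivalent to a parabolic one, and (2) two coinitial, cofinal parabolic traces are causally equivalent.

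For stage (1), I would proceed by induction on the number of forward--backward alternations, i.e.\ on the number of pairs \(t_i; t_{i+1}\) where \(t_i\) is forward and \(t_{i+1}\) is backward. To reduce this count, consider such an adjacent pair \(X \redl{\theta_1} X' \revredl{\theta_2} X''\). Two cases arise. If \(\theta_1 \conc \theta_2\), the sideways diamond (\autoref{thm:side}) rewrites the pair into a backward transition followed by a forward one, giving a causally equivalent trace with strictly fewer alternations. If not, the only way for a forward and a subsequent backward transition to be causally dependent is that \(t_{i+1} = \Rev{t_i}\): indeed, the labels \(\theta_1\) and \(\theta_2\) must be related by \(\dep\), and because the same enhanced keyed label is used for a transition and its reverse, the loop lemma (\autoref{lemma:loop}) together with uniqueness of the key guarantees \(\theta_1 = \theta_2\) and \(X'' = X\); then \(t_i; t_{i+1} \eq \epsilon\) by definition of causal equivalence, so the pair is simply erased. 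In either case, an induction gives a parabolic representative. This case analysis---arguing that a non-concurrent forward/backward adjacency must be a loop---is the main obstacle; it requires a small auxiliary lemma stating that if \(t: X \redl{\theta} X'\) and \(t': X' \revredl{\theta'} X''\) are composable and \(\theta \dep \theta'\) or \(\theta' \dep \theta\), then \(t' = \Rev{t}\). This should be proved by an inspection of \autoref{def:deprel} together with the observation that a key used by a forward transition freshly appears and, if immediately consumed by a backward one, can only belong to the very prefix that just introduced it.

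For stage (2), let \(T_1 \eq T_{1,\text{bw}}; T_{1,\text{fw}}\) and \(T_2 \eq T_{2,\text{bw}}; T_{2,\text{fw}}\) be two parabolic representatives with common source \(X\) and common target \(Y\). Both \(T_{1,\text{bw}}\) and \(T_{2,\text{bw}}\) end in some standard (key-free) reachable processes \(X_1\) and \(X_2\) respectively. Because \ccsk is deterministic on backward traces from a reachable process---each reachable \(X\) has a unique standard origin \(\orig{X}\), essentially by \autoref{lemma:loop} and freshness of keys---one may extend both \(T_{1,\text{bw}}\) and \(T_{2,\text{bw}}\) down to \(\orig{X}\); then \(T_{1,\text{fw}}\) and \(T_{2,\text{fw}}\) (suitably extended) are two forward traces from the same standard origin with the same target, and a symmetric induction using the forward diamond (\autoref{thm:forward}) together with uniqueness of keyed reductions shows they are permutations of one another under \(\eq\).

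Finally I would stitch stages (1) and (2) together: given any two coinitial, cofinal traces, compute their parabolic forms using stage (1), observe that both are parabolic representatives of coinitial, cofinal traces between the same pair of processes, and conclude equality up to \(\eq\) via stage (2). The chief subtlety I anticipate, beyond the loop-identification in stage (1), is the book-keeping needed to make the ``extend backwards to the standard origin'' argument formal; this relies on the fact proved via the loop lemma that backward reduction from a reachable process is confluent and terminates at a unique standard process, which is where I would invest the most care.
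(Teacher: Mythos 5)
Your strategy is sound and is, in substance, the argument the paper invokes rather than writes out: the paper's proof checks the three hypotheses of the axiomatic approach of Lanese et al.---the square property (\autoref{thm:sp}), concurrency of coinitial backward transitions (\autoref{lem:bwconc}) and well-foundedness (\autoref{lem:wf})---and then cites their Propositions~3.4 and 3.5, which are precisely your stages (1) and (2). In particular, the auxiliary lemma you propose for stage (1) need not be proved from scratch by inspecting \autoref{def:deprel}: given \(t:X\redl{\theta_1}X'\) and \(t':X'\revredl{\theta_2}X''\) with \(t'\neq\Rev{t}\), the transitions \(\Rev{t}\) and \(t'\) are coinitial and backward, hence concurrent by \autoref{lem:bwconc}, and \autoref{def:co-init-conc} unfolds this to \(\theta_1\conc\theta_2\) in \(t;t'\)---the contrapositive of your claim. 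So your proof is a hand-inlined version of the cited one, which is legitimate but buys nothing the citation does not already provide.

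Two points would need repair if you carried it out. First, your induction measure for stage (1) does not decrease: swapping an adjacent forward/backward pair can \emph{increase} the number of adjacent forward-before-backward pairs (in \(f_1 f_2 b_1 b_2\) there is one such pair; after swapping it you get \(f_1 b_1' f_2' b_2\), which has two). The standard fix is to count all inversions, \ie pairs \(i<j\) with \(t_i\) forward and \(t_j\) backward, adjacent or not. Second, stage (2) is where the citation does the real work and where your sketch is thinnest: the backward part of a parabolic trace need not end in a standard process (only your subsequent padding down to the unique origin \(\orig{X}\), justified by backward confluence plus \autoref{lem:wf}, makes that reduction legitimate), and after padding you still must prove that two coinitial, cofinal \emph{forward-only} traces are causally equivalent. \enquote{Uniqueness of keyed reductions} identifies which transition of the second trace introduces a given keyed prefix, but it does not by itself show that this transition is concurrent with all of its predecessors so that it can be permuted to the front; that step needs its own induction and is the genuinely delicate part of the direct proof.
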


The \enquote{axiomatic approach} to reversible computation~\cite{Lanese2020} allows to obtain causal consistency from other properties that are generally easier to prove.

\begin{restatable}[Backward transitions are concurrent]{lemma}{bwconc}
	\label{lem:bwconc}
	Any two different coinitial backward transitions \(t_1: X\revredl{\theta_1}X_1\) and \(t_2:X\revredl{\theta_2}X_2\) are concurrent.
\end{restatable}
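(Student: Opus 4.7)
The plan is to induct on the structure of $X$, using \autoref{def:co-init-conc} to reduce the coinitial concurrency of $t_1$ and $t_2$ to a purely combinatorial condition on the enhanced keyed labels $\theta_1,\theta_2$ via the axioms of \autoref{def:deprel}. By \autoref{app:sub-term}, I assume without loss of generality that the top-level connective of $X$ matches the outer shape of $\theta_1$ (and symmetrically of $\theta_2$), absorbing the label-transparent rules pre., res., \(+_{\L}\) and \(+_{\R}\) into the induction step.

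The base cases are vacuous since standard processes admit no backward transitions. For a top-level prefix $X=\alpha[k].Y$, two different coinitial backward transitions force both derivations to end with \(\Rev{\text{pre.}}\): the rule \(\Rev{\text{act.}}\) requires \(\std(Y)\), in which case it yields the unique backward transition available from $X$. Hence $\theta_1$ and $\theta_2$ are inherited unchanged from distinct coinitial backward transitions of $Y$, and the induction hypothesis closes the case. The top-level restriction and sum cases are analogous (for the sum, at most one of \(\Rev{+_{\L}}\), \(\Rev{+_{\R}}\) applies in a reachable process since only one branch can be non-standard).

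The principal case is parallel composition $X=Y\mid Z$, split on the rules used by $t_1$ and $t_2$ among \(\Rev{\mid_{\L}}\), \(\Rev{\mid_{\R}}\), and \(\Rev{\text{syn.}}\). If both transitions are wrapped by the same \(\mid_{\L}\) (\resp \(\mid_{\R}\)), \autoref{def:deprel} reduces dependency to the inner labels, which the induction hypothesis on $Y$ (\resp $Z$) handles. If one is \(\mid_{\L}\)-wrapped and the other \(\mid_{\R}\)-wrapped, no axiom of $\dep$ relates them, so concurrency is immediate. If one is a synchronization \(\langle\mid_{\L}\theta^{\L},\mid_{\R}\theta^{\R}\rangle\) and the other a solo \(\mid_{\L}\theta'\), the reduction of $\dep$ leaves only the pair of backward sub-transitions of $Y$ bearing $\theta^{\L}$ and $\theta'$; their keys differ because the synchronization key lies in $\key(Z)$ (synchronizations share their key across both components), while $\kay(\theta')\notin\key(Z)$ by the side condition of \(\Rev{\mid_{\L}}\), so the two sub-transitions are genuinely different and the induction hypothesis on $Y$ applies. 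Finally, if both are synchronizations, they have distinct keys, so both the left sub-transitions on $Y$ and the right sub-transitions on $Z$ differ, and both component inductions close the case.

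The main obstacle will be the mixed synchronization/solo subcase above: since $\dep$ is reflexive, a naive application of the induction hypothesis would fail if the sub-transitions happened to coincide, so one must carefully exploit the uniqueness of keys in reachable processes (\cite[Lemma 3.4]{Lanese2021}) together with the $\kay(\theta)\notin\key(Y)$ side-condition of the parallel rules to rule out such a collision.
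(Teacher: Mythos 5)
Your proof is correct and follows essentially the same route as the paper's: a case analysis on the shape of the two labels at a parallel composition (skipping over prefix, restriction and sum via the sub-term lemma of \autoref{app:sub-term}), with the key-disjointness of distinct coinitial backward transitions used to guarantee that the component sub-transitions are genuinely different before invoking the induction hypothesis. The only difference is cosmetic--you induct on the structure of \(X\) where the paper inducts on the length of \(\theta_1\), and you derive the key-disjointness fact locally in the mixed synchronization/solo case where the paper states it once upfront.
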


\begin{proof}
	The proof is by induction on the size of \(\theta_1\) and easier once noted that the backward LTS imposes \(\kay(\theta_1) \neq \kay(\theta_2)\) for both transitions to be possible. 
\end{proof}

\begin{lemma}[Well-foundedness]
	\label{lem:wf}
	For all \(X\) there exists \(n \in \mathbb{N}\), \(X_0, \hdots, X_n\) \st  \(X \zigzagarrow X_n \zigzagarrow \cdots \zigzagarrow X_1 \zigzagarrow \orig{X}\) and \(\std(\orig{X})\).
\end{lemma}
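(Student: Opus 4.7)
The plan is to exploit the fact that, as noted just before \autoref{ex1}, we only reason about reachable processes. By the definition of reachability, there exists a standard process \(\orig{X}\) and a forward trace \(T : \orig{X} \sllinearrow X\). The desired backward trace is obtained by reversing \(T\) pointwise via the loop lemma.

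More precisely, first I would unpack \(T\) as a finite composition of forward transitions
\[
\orig{X} = X_0 \efs[\theta_1] X_1 \efs[\theta_2] \cdots \efs[\theta_n] X_n = X,
\]
where \(n\) is the length of \(T\). This \(n\) will be the integer witnessing the statement; equivalently, one can take it to be \(|\key(X)|\), since each forward rule in \autoref{fig:provedltsrulesccskfw} either introduces exactly one fresh key (act., syn.) or propagates derivations without adding keys, and standard processes have empty key set.

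Next, I would apply the loop lemma (\autoref{lemma:loop}) to each \(t_i : X_{i-1} \efs[\theta_i] X_i\), obtaining a backward transition \(\Rev{t_i} : X_i \ebs[\theta_i] X_{i-1}\). Composing these in reverse order yields the trace
\[
X = X_n \ebs[\theta_n] X_{n-1} \ebs[\theta_{n-1}] \cdots \ebs[\theta_1] X_0 = \orig{X},
\]
which is exactly the trace required by the statement, once the \(X_i\) are reindexed so that \(\orig{X}\) appears at the end.

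The main and only obstacle worth flagging is that the lemma is essentially a repackaging of reachability plus the loop lemma, and so the verification is almost immediate; what one needs to check is that each reversal is individually well-defined and that the resulting transitions compose (the target of \(\Rev{t_i}\) coincides with the source of \(\Rev{t_{i-1}}\) by construction). No induction on process structure is needed, and no subtle interaction with the dependency relation or with the diamond properties is invoked, which makes this a very short argument compared to \autoref{thm:forward} or \autoref{thm:sp}.
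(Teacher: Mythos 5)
Your argument is correct for the statement as literally written, but it takes a genuinely different route from the paper, and the difference matters for how the lemma is used. The paper's proof is a one-line decreasing-measure argument: every backward transition strictly decreases the number of occurrences of keys, so \emph{every} backward computation terminates (and, since a reachable non-standard process always admits some backward step, it terminates in a standard process). You instead unfold the reachability witness \(\orig{X} \sllinearrow X\) and reverse it transition by transition with the loop lemma. That cleanly exhibits \emph{one} finite backward trace from \(X\) to a standard origin, which is exactly what the displayed statement asks for. However, the surrounding text says the lemma "forbids infinite reverse computation", and that stronger reading is what is actually fed into the axiomatic approach in the proof of \autoref{thm:cc} (well-foundedness in the sense of \cite[Proposition 3.5]{Lanese2020} is the absence of infinite backward sequences, not merely the existence of a terminating one). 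Your construction does not rule out the existence of some \emph{other}, infinite, backward computation from \(X\); the key-counting measure does. So either keep your reachability-plus-loop argument for the existence claim and add the one-sentence measure argument to get termination of all backward traces, or replace your argument by the measure argument outright. A minor side remark: your aside that \(n = |\key(X)|\) is right if you count distinct keys (each forward rule introduces exactly one fresh key, even syn., which places the same key twice), whereas the paper's measure counts \emph{occurrences} of keys, which a reversed synchronisation decreases by two; both are valid decreasing measures.
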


This lemma forbids infinite reverse computation, and is obvious in \ccsk as any backward transition strictly decreases the number of occurrences of keys. %

\begin{proof}[\protect{of \autoref{thm:cc}}]
	We can re-use the results of the axiomatic approach~\cite{Lanese2020} since our forward LTS is the symmetric of our backward LTS, and as our concurrency relation (that the authors call the independence relation, following a common usage~\cite[Definition 3.7]{Sassone1996}) is indeed an irreflexive symmetric relation: symmetry is immediate by definition, irreflexivity follows from the fact that \(\dep\) is reflexive.
	Then, by \autoref{thm:sp} and \autoref{lem:bwconc}, the parabolic lemma holds~\cite[Proposition 3.4]{Lanese2020}, and since the parabolic lemma and well-foundedness hold (\autoref{lem:wf}), causal consistency holds as well~\cite[Proposition 3.5]{Lanese2020}. 
\end{proof}

\begin{example}
	Re-using the full trace presented in \autoref{ex1}, we can re-organize the transitions using the diamonds so that every undone transition is undone immediately, and we obtain up to causal equivalence the trace 	
	\[a . \out{b} \mid b +c \redl{\protect{\mid_{\L}a[m]}} a[m] . \out{b} \mid b + c \redl{\protect{\langle \mid_{\L}\out{b}[n], \mid_{\R}b[n]\rangle}} a[m] . \out{b}[n] \mid b[n] + c\]
\end{example}

\subsubsection{Optimality and Other Criteria}
An optimality criteria was defined in terms of adequation of the concurrency definitions for the LTS and for the reduction semantics~\cite[Theorem 5.6]{Cristescu2013}.
While this criteria requires a reduction semantics and a notion of reduction context to be formally proven, we believe it is easy to convince oneself that the gist of this property--the fact that non-\(\tau\)-transitions are concurrent iff there exists a \enquote{closing} context in which the resulting \(\tau\)-transitions are still concurrent--holds in our system: as concurrency on \(\tau\)-transitions is defined in terms of concurrency of its elements (\eg, \(\langle \theta^1_{\R}, \theta^1_{\L} \rangle \conc \langle \theta^2_{\R}, \theta^2_{\L} \rangle\) iff \(\theta^1_d \conc \theta^2_d\) for at least one \(d \in \{\L, \R\}\)), the optimality criteria is obtained \enquote{for free}.

Other properties~\cite[Section 6]{Cristescu2015b} have been investigated but would require a richer system to be truly meaningful: typically, \enquote{denotationality} requires for structural congruence to be preserved by the causal semantics, and hold vacuously in our structural-congruence-free system.
However, properties such as the parabolic lemma~\cite[Lemma 10]{Danos2004}--\enquote{any trace is equivalent to a backward trace followed by a forward trace}--or \enquote{RPI}~\cite[Definition 3.3]{Lanese2020}--\enquote{reversing preserves independence}, \ie concurrency is insensible to directions--follow immediately.

\section{Adding Infinite Behaviours to Reversible Systems}
\label{sec:infinite}
\subsection{Initial Considerations: Which to Pick?}
\label{ssec:discussion}

To endow \ccsk with infinite behaviors, the obvious choices are recursion, replication or iteration.
Carelessly adding the rules from \ccsd, \ccsr or \ccsi to \ccsk would result in the rules presented in \autoref{ccsk-infin} that, if \(X\) is not standard, would duplicate keys, voiding most of the propositions we have established so far.

\begin{figure}
	\begin{tcolorbox}[adjusted title=Infinite Group]
		\hfill
		\begin{prooftree}
			\hypo{X \fwlts{k}{\alpha}X'}
			\hypo{D \defeqc X}
			\infer[]2[const.]{D \fwlts{k}{\alpha} X'}
		\end{prooftree}
		\hfill
		\begin{prooftree}
			\hypo{X \fwlts{k}{\alpha} X'}
			\infer[]1[iter.]{X^* \fwlts{k}{\alpha} X' ; X^*}
		\end{prooftree}
		\hfill~
		\\[.9em]
		\hfill
		\begin{prooftree}
			\hypo{X \fwlts{k}{\alpha} X'}
			\infer[]1[repl.\(_1\)]{!X \fwlts{k}{\alpha} !X \mid X'}
		\end{prooftree}
		\hfill
		\begin{prooftree}
			\hypo{X \redl{\protect{\lambda[k]}} X'}
			\hypo{X \redl{\protect{\out{\lambda}[k]}} X''}
			\infer[]2[repl.\(_2\)]{!X \redl{\protect{\tau[k]}} !X \mid (X' \mid X'')}
		\end{prooftree}
		\hfill~
	\end{tcolorbox}
	\caption{Adding infinite behaviors to \ccsk}
	\label{ccsk-infin}
\end{figure}

For the replication and iteration operator, however, it is immediate to observe that \(!X\) or \(X^*\) are reachable if and only if \(\std(X)\): indeed, in iter.\ or the repl.\ rules, \(X\) is unchanged, and hence can never reduce \enquote{under} the \(!\) connector.
Similarly, if constants are required to be standard, then no memory can be duplicated using recursion: if they are not, then chances are that the LTS would disallow using the const.\ rule twice with the same constant, as it would allow to duplicate its keys.

It seems that all three forms of infinite behavior have \enquote{built-in mechanisms} to disallow duplication of keys in \ccsk, thus preventing the study of the replication of processes with a past.
On the other hand, \rccs seems to be able to accomodate duplication of the memory, at least on the surface of it~\cite[Section 4.1]{Aubert2021c}.
Indeed, one could have a process \(P\) endowed with a memory \(m\)--written \(m \vartriangleright P\)--use a rule similar to repl.\(_1\) to obtain \eg \(!(m \vartriangleright P) \efs !(m \vartriangleright P) \mid m' \vartriangleright P'\): while this term would certainly be problematic and require mechanisms to not duplicate the memory events in \(m\), or at least to disallow both threads from backtracking independently, \rccs's syntax certainly lends itself more naturally to express such behavior, that we reserve for future study.

Declensions of the const.\ rule, where \(X\) has to be standard, have been studied or mentioned for \rccs~\cite{Danos2004,Danos2005,Krivine2006} and \ccsk~\cite{Graversen2021}, but as far as we know, recursion and iterations have never been studied for distributed reversible systems.
As recursion, iteration and replication constitute a strict expressiveness hierarchy \wrt \eg weak bisimulation~\cite{Busi2009} in forward-only \ccs, we believe studying all three of them to understand if the same distinction holds in the presence of reversibility is a worthwhile task.
We propose to start this task by studying replication, as it is more commonly used when extending \ccs with name-passing abilities, in a way more convenient~\cite[p. 12]{Sangiorgi2001}, and as it has already been discussed for \rccs in an unpublished draft~\cite[Section 4.1]{Aubert2021c}.

\subsection{Adding Replication to \ccsk\ldots}
\label{ssec:repl-ccsk}
Adding replication to \ccsk requires the following adjustments:
\begin{enumerate}
	\item The replication operator \(!\) is added to \autoref{def:operators}, 
	\item In enhanced keyed labels (\autoref{def:enhanced-keyed-labels}), \(\upsilon\) can now range over \(\{\mid_{\L}, \mid_{\R}, !\}^*\), and \(\ell\) and \(\kay\) are extended straightforwardly,
	\item The dependency relation (\autoref{def:deprel}) now include

	\begin{minipage}{.2\textwidth}%
		\begin{align*}
		!\theta &\dep !\theta \\
		!\theta & \dep \mid_{\L} \theta' 
		\end{align*}%
	\end{minipage}%
	\hfill
	\begin{minipage}{.75\textwidth}
		\begin{numcases}{!\theta \dep \mid_{\R} \theta'\text{ if }}
			\theta = \langle \theta_{\L}, \theta_{\R}\rangle \text{ and } \theta' = \mid_{\L} \theta'' & or \label{a} \\
			\theta = \langle \theta_{\L}, \theta_{\R}\rangle \text{ and } \theta' = \mid_{\R} \theta'' & or \label{b} \\
			\theta \dep \theta' & otherwise,
		\end{numcases}
	\end{minipage}
	
	which immediately implies \(!\theta \dep \langle \mid_{\L} \theta_{\L}, \mid_{\R} \theta_{\R} \rangle\), as \(\theta \dep \langle \theta_{\L}, \theta_{\R} \rangle\) if \(\exists d . \theta \dep \theta_d\), but since \(!\theta \dep \mid_{\L} \theta_{\L}\) this is always the case.
	\label{dep-extd}
	\item The forward and backward proved LTS (\autoref{fig:provedltsrulesccskfw})  now have the rules
	\begin{center}
		\begin{prooftree}
			\hypo{X \redl{\theta} X'}
			\infer[right label={\(\kay(\theta) = k \)}]1[repl.\(_1\)]{!X \redl{!\theta} !X \mid X'}
		\end{prooftree}
		\hfill
		\begin{prooftree}
			\hypo{X \redl{\protect{\theta_{\L}\lambda[k]}} X'}
			\hypo{X \redl{\protect{\theta_{\R} \out{\lambda}[k]}} X''}
			\infer[]2[repl.\(_2\)]{!X \redl{\protect{!\langle \mid_{\L} \theta_{\L} \lambda \protect{[k]}, \mid_{\R} \theta_{\R} \out{\lambda} [k] \rangle}} !X \mid (X' \mid X'')}
		\end{prooftree}
	\end{center}
	and their reverses.
	Note that \(!X\) is reachable only if \(\std(X)\).
\end{enumerate}

\begin{remark}[On structural congruence, and the lack thereof]
	\label{rem:congruence}
	The equations (\ref{a}) and (\ref{b}) of the dependency relation comes from the absence of structural congruence: indeed, assuming \eg \(X \redl{\protect{\theta_{\L}}} X_1 \redl{\protect{\theta_{\L}'}} X_1'\) with \(\theta_{\L} \conc \theta_{\L}'\) and \(X \redl{\protect{\theta_{\R}}} X_2\), we cannot let \(!\langle \mid_{\L} \theta_{\L}, \mid_{\R} \theta_{\R}\rangle\) and \(\mid_{\R}\mid_{\L}\theta'_{\L}\) be concurrent in 
	\begin{align*}
	!X & \redl{\protect{!\langle \mid_{\L} \theta_{\L}, \mid_{\R} \theta_{\R}\rangle}} !X \mid (X_1 \mid X_2) \redl{\protect{\mid_{\R}\mid_{\L}\theta'_{\L}}} !X \mid (X_1' \mid X_2)\text{,}\\
	\shortintertext{as we cannot re-organize this trace.
	Indeed, while an extended version of the forward diamond lemma could allow us to obtain an \(X_1''\) and something like}
	!X & \redl{\protect{!\theta_{\L}'}} !X \mid X_1'' \redl{\protect{\langle \mid_{\R} ! \theta_{\R}, \mid_{\L} \theta_{\L} \rangle}} (!X \mid X_2) \mid X_1'\text{,}\\
	\shortintertext{we can \emph{not} make the target processes equal without a structural congruence making the parallel composition associative and commutative.}
	\end{align*}
	
	The \enquote{usual} structural congruence is missing from all the proved transition systems~\cite{Carabetta1998,Degano1992,Degano2001,Demangeon2018}, or missing the associativity and commutativity of the parallel composition~\cite[p.~242]{Degano1999}. %
	While adding such a congruence would benefits the expressiveness, making it interact nicely with the derived proof system \emph{and} the reversible features~\cites[Section 4]{Lanese2021}{Aubert2020d}~is a challenge we prefer to postpone. %
\end{remark}

Notions such as transition, trace, reachability and concurrency remain defined the same way, and \autoref{lem:part_of_trans} still holds, as it is not impacted by our changes.

\begin{example}
	Consider the transition \(X_1 = a.\out{a} + \out{a} \redl{\protect{a[m]}} a[m].\out{a} + \out{a} = X_2\), while we can obtain the derivation \(\pi_0\) below, it should be noted that \(!X_2\) is \emph{not} reachable and thus can be discarded.
	Indeed, as both repl.\ rules leave the term \(X\) unchanged, there is no transition that allows to reduce \enquote{under} the \(!\) connective.
	\begin{align*}
		\pi_0 & = 
		\begin{prooftree}
			\hypo{X_2 \redl{\protect{\out{a}[n]}} a[m].\out{a}[n] + \out{a}}
			\infer[]1[repl.\(_1\)]{! X_2 \redl{\protect{!\out{a}[n]}} ! X_2 \mid a[m].\out{a}[n] + \out{a}}
		\end{prooftree}  	
		\shortintertext{However, \(!X_1\) have two different \(\tau\)-transitions, corresponding respectively to \(!X_1\) synchronizing with its \enquote{future self} (\(\pi_1\)) or with itself (\(\pi_2\)).}
		\pi_1 & = 
		\begin{prooftree}
			\hypo{X_1 \redl{\protect{a[n]}} a[n].\out{a} + \out{a} }
			\infer[]1[repl.\(_1\)]{!X_1 \redl{\protect{!a[n]}} !X_1 \mid a[n].\out{a} + \out{a}}
			\hypo{X_2 \redl{\protect{\out{a}[n]}} a[m].\out{a}[n] + \out{a}}
			\infer[]2[syn.]{! X_1 \mid X_2 \redl{\protect{\langle \mid_{\L} !a[n], \mid_{\R}\out{a}[n]\rangle}} !X_1 \mid (a[n].\out{a} + \out{a} \mid a[m].\out{a}[n] + \out{a})}
		\end{prooftree}	
		\\[1.1em]
		\pi_2 & = 
		\begin{prooftree}
			\hypo{X_1 \redl{\protect{a[m]}} X_2 }
			\hypo{X_1 \redl{\protect{\out{a}[m]}} a.\out{a} + \out{a}[m]}
			\infer[]2[repl.\(_2\)]{!X_1 \redl{\protect{! \langle \mid_{\L} a[m], \mid_{\R}\out{a}[m]\rangle}} !X_1 \mid (X_2 \mid  a.\out{a} + \out{a}[m])}
		\end{prooftree}
	\end{align*}
\end{example}

\subsection{\ldots While Preserving Diamonds and Squares \ldots}
\label{ssec:square-inf}
We now revisit \autoref{sec:diamond}'s results and prove they carry over this extended calculus, up to a \emph{collapsing function} \(\sigma\) that identifies \(!\) and \(\mid_{\R}\) prefixes on labels: while those labels behave differently \wrt the dependency relation, they should be identified when re-organizing traces in \eg the extended forward diamond (\autoref{thm:forwardext}).

\begin{definition}[Collapsing function]
	We define, for \(d \in \{\L, \R\}\):
	\begin{align*}
		\sigma(\alpha[k]) & = \alpha[k]                  &   &   & \sigma(\mid_d \upsilon)                         & = \mid_d \sigma (\upsilon)                                  \\
		\sigma(!\upsilon) & = \mid_{\R} \sigma(\upsilon) &   &   & \sigma(\langle \theta_{\L}, \theta_{\R}\rangle) & = \langle \sigma(\theta_{\L}), \sigma(\theta_{\R}) \rangle 
	\end{align*}
\end{definition}

\begin{restatable}[Forward diamond (extd)]{theorem}{thmfwdiamondext}\label{thm:forwardext}
	For all \(X \redl{\theta_1} X_1 \redl{\theta_2} Y\) with \(\theta_1 \conc \theta_2\), there exists \(X_2\) \st  \(X \redl{\hat{\theta}_2} X_2 \redl{\hat{\theta}_1} Y\), and \(\sigma(\hat{\theta}_i) = \sigma(\theta_i')\) for \(i \in \{1, 2\}\).
\end{restatable}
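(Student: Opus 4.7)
The plan is to extend the induction used for \autoref{thm:forward}, namely induction on the length of the derivation of $X \redl{\theta_1} X_1$, adding cases for repl.$_1$ and repl.$_2$ and tracking the collapsing $\sigma$ throughout.

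For every rule already present in finite \ccsk (act., pre., res., the parallel group, the sum group), the original argument applies verbatim: \autoref{lem:part_of_trans} descends through $\mid_d$ and syn., and \autoref{lem:rm} handles the pre.\ case by stripping the inert keyed prefix. In each of these cases I set $\hat{\theta}_i = \theta_i$, so that $\sigma(\hat{\theta}_i) = \sigma(\theta_i)$ is immediate and the induction hypothesis (which is the extended statement applied to a strictly shorter derivation) delivers the result.

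For the repl.$_1$ case, the first transition is $!X' \redl{!\theta} !X' \mid X''$ derived from $X' \redl{\theta} X''$, so $X_1$ is a parallel composition and $\theta_2$ must be produced by $\mid_{\L}$, $\mid_{\R}$, or syn. The extended dependency rules from item \ref{dep-extd}, namely that $!\theta \dep \mid_{\L}\theta'$ holds unconditionally and that $!\theta \dep \langle \mid_{\L}\cdot, \mid_{\R}\cdot \rangle$ always holds (as observed right after item \ref{dep-extd}), eliminate the $\mid_{\L}$ and syn.\ sub-cases. The only surviving sub-case is $\theta_2 = \mid_{\R}\theta_2'$ where, by reading the ``otherwise'' clause of the extended dep rule, concurrency reduces to $\theta \conc \theta_2'$ at the level of $X'$. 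Since the derivation of $X' \redl{\theta} X''$ is strictly shorter, applying the IH to the trace $X' \redl{\theta} X'' \redl{\theta_2'} Y''$ yields $X' \redl{\hat{\theta}_2'} X''' \redl{\hat{\theta}} Y''$ with $\sigma(\hat{\theta}_2') = \sigma(\theta_2')$ and $\sigma(\hat{\theta}) = \sigma(\theta)$. Rebuilding with repl.$_1$ and then $\mid_{\R}$ produces the required trace $!X' \redl{!\hat{\theta}_2'} !X' \mid X''' \redl{\mid_{\R}\hat{\theta}} !X' \mid Y'' = Y$, and since $\sigma(!\upsilon) = \mid_{\R}\sigma(\upsilon)$, we obtain $\sigma(!\hat{\theta}_2') = \sigma(\mid_{\R}\theta_2') = \sigma(\theta_2)$ and $\sigma(\mid_{\R}\hat{\theta}) = \sigma(!\theta) = \sigma(\theta_1)$, as required.

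The repl.$_2$ case is structurally analogous but with $X_1 = !X' \mid (X'' \mid X''')$ built from two premises $X' \redl{\theta_{\L}\lambda[k]} X''$ and $X' \redl{\theta_{\R}\out{\lambda}[k]} X'''$. The $\mid_{\L}$ and syn.\ sub-cases are again forbidden by dep, and inside $\mid_{\R}$ the strict conditions (\ref{a}) and (\ref{b}) rule out any $\mid_{\R}\mid_d\cdots$; what survives is either a direct action $\mid_{\R}\alpha[k]$ or a replication-prefixed move $\mid_{\R}!\theta_2''$, and in each case concurrency descends to the $X'$-level against one of the two premises. The appropriate IH invocation, followed by a reapplication of repl.$_1$ or repl.$_2$ together with $\mid_{\R}$, delivers the swapped trace; the $\sigma$-bookkeeping is identical to the repl.$_1$ case, since each swap again amounts to turning a leading $!$ into a leading $\mid_{\R}$. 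I expect this repl.$_2$ case to be the main obstacle, not because of any deep difficulty but because of the proliferation of sub-cases and the need to correctly identify which premise of the repl.$_2$ derivation the moved transition acts on; the collapsing function is exactly what absorbs the resulting mismatch between ``I did a repl.\ first'' and ``I did a $\mid_{\R}$ on the replica first''.
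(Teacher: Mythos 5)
Your overall strategy---re-running the induction of \autoref{thm:forward}, setting \(\hat{\theta}_i = \theta_i\) in all pre-existing cases, and adding the two replication rules---is exactly the paper's, and your repl.\(_1\) case reproduces the paper's argument faithfully, including the reduction of \(!\theta \conc \mid_{\R}\theta_2'\) to \(\theta \conc \theta_2'\) via the \enquote{otherwise} clause and the \(\sigma\)-bookkeeping \(\sigma(!\hat{\theta}_2') = \mid_{\R}\sigma(\hat{\theta}_2') = \sigma(\mid_{\R}\theta_2')\).

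The gap is in your case analysis for repl.\(_2\). After the first transition \(!X' \redl{\theta_1} !X' \mid (X'' \mid X''')\) with \(\theta_1 = {!\langle \mid_{\L}\theta_{\L}\lambda[k], \mid_{\R}\theta_{\R}\out{\lambda}[k]\rangle}\), a \(\mid_{\R}\)-prefixed second transition is performed by the right-hand thread \(X'' \mid X'''\), whose primary connector is a parallel composition; its transitions are therefore necessarily labeled \(\mid_{\L}\cdots\), \(\mid_{\R}\cdots\), or \(\langle\mid_{\L}\cdots,\mid_{\R}\cdots\rangle\). The two sub-cases you keep, \(\mid_{\R}\alpha[k']\) and \(\mid_{\R}!\theta_2''\), are thus both underivable (the former would need a prefix, the latter a \(!\), as primary connector of \(X''\mid X'''\)), while the one sub-case that actually survives---\(\theta_2 = \mid_{\R}\langle\mid_{\L}\theta_{\L}',\mid_{\R}\theta_{\R}'\rangle\), a synchronization \emph{inside} the freshly spawned copy---is not treated at all. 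In that case concurrency descends to \emph{both} premises of repl.\(_2\), not \enquote{one of the two premises}: one obtains \(\theta_{\L}\lambda[k] \conc \theta_{\L}'\) and \(\theta_{\R}\out{\lambda}[k] \conc \theta_{\R}'\) at the level of \(X'\), applies the induction hypothesis twice, and recombines with repl.\(_2\) on the hatted labels followed by \(\mid_{\R}\) over syn. This is how the paper closes the case; your sketch as written handles only vacuous sub-cases and misses the real one.
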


\begin{proof}
	We can re-use the proof for \autoref{thm:forward} immediately, since \(\sigma (\theta) = \theta\) in all the cases we had, and since \(\theta_2\) cannot be of the form \(!\theta\) in the cases where we have to decompose it.
	But we now need to consider the case where repl.\(_1\) or repl.\(_2\) is the last rule of \(\theta_1\)'s derivation
	Those cases offer little resistance, but echoes back to \autoref{rem:congruence} to impose some dependencies between transitions that would require structural congruence to be able to re-order the resulting processes.
\end{proof}

\begin{restatable}[Sideways diamond (extd)]{theorem}{thmsidediamondext}\label{thm:sideext}
	For all \(X \redl{\theta_1} X_1 \revredl{\theta_2} Y\) with \(\theta_1 \conc \theta_2\), there exists \(X_2\) \st  \(X \revredl{\theta'_2} X_2 \redl{\theta'_1} Y\), and \(\sigma(\theta_i) = \sigma(\theta_i')\) for \(i \in \{1, 2\}\).
\end{restatable}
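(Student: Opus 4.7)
The plan is to proceed by induction on the length of the derivation of $X \redl{\theta_1} X_1$, mirroring the proof of \autoref{thm:side} augmented with the adaptations from \autoref{thm:forwardext}. For the rules inherited from finite \ccsk (act., pre., res., and the parallel and sum groups), the argument carries over essentially verbatim, since \autoref{lem:part_of_trans} and \autoref{lem:rm} remain valid for backward transitions. The delicate pre.\ case is dispatched exactly as in \autoref{thm:side}: assuming $\kay(\theta_2) = k$ would demand $\std(X_1')$, which fails because $X_1'$ arose from a forward transition, so $\kay(\theta_2) \neq k$ and \autoref{lem:rm} applies to enable the induction hypothesis.

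The new cases are when the last rule is repl.$_1$ or repl.$_2$. I argue both are vacuous: no $\theta_2$ concurrent with $\theta_1$ can label a backward transition from $X_1$. Indeed, $X = !X'$ is reachable only if $\std(X')$, so $X_1$ has the shape $!X' \mid X''$ (repl.$_1$) or $!X' \mid (X_{\L}'' \mid X_{\R}'')$ (repl.$_2$), with $!X'$ admitting no backward transition. Hence $\theta_2$ must be prefixed by $\mid_{\R}$ and undo an action in the activated copy. Since that copy results from a single forward step starting from the standard $X'$, it admits a unique backward transition, namely the exact reverse of that step, because individual reversals of synchronization sides are blocked by the $\kay(\theta) \notin \key(Y)$ side condition of the \Rev{\mid_d} rules (both sides share the synchronization key). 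This unique reverse is always $\theta_1$-dependent: for repl.$_1$ with $\theta_1 = !\theta_1'$ and $\theta_1'$ not a pair, the reverse $\mid_{\R}\theta_1'$ falls in the third (catch-all) branch of the $!\theta \dep \mid_{\R}\theta'$ rule and yields dependency by reflexivity of $\dep$; for the synchronization sub-cases, $\theta_1$ has the form $!\langle \mid_{\L}\theta_{\L}\lambda[k], \mid_{\R}\theta_{\R}\out{\lambda}[k]\rangle$ and its reverse wraps the same structure under $\mid_{\R}$, landing again in the third branch with dependency by reflexivity.

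The collapsing function $\sigma$ is not genuinely exercised here: in all successfully handled cases the rearranged labels $\theta_i'$ coincide with the originals $\theta_i$, so $\sigma(\theta_i) = \sigma(\theta_i')$ holds trivially. The main obstacle is the systematic case analysis establishing vacuity of the replication cases, combining standardness of $X'$, the side conditions of the \Rev{\mid_d} rules, and the extended dependency rules to pin down a unique, necessarily $\theta_1$-dependent backward transition. This stands in clear contrast with \autoref{thm:forwardext}, where the replication cases are the principal source of complication and the $\sigma$-identification is genuinely required to reconcile $!$ with $\mid_{\R}$ prefixes across the rearrangement.
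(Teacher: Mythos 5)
Your overall strategy coincides with the paper's: reuse the induction of the extended forward diamond (together with the pre.\ argument of \autoref{thm:side} for the backward second step), and dispose of the new repl.\(_1\)/repl.\(_2\) cases by showing they are vacuous, because every backward transition available after a replication step is a dependent of it. Your observation that \(\sigma\) is never genuinely exercised here is also consistent with the paper, which only needs the collapsing function in the forward diamond.

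There is, however, one concrete omission in your case analysis. From \(X_1 = {!X'} \mid X_1''\) you claim that, since \(!X'\) admits no backward transition, \(\theta_2\) \enquote{must be prefixed by \(\mid_{\R}\)}. This forgets that the \emph{reverses of the replication rules themselves} apply to the whole term \({!X'} \mid X_1''\): by \(\Rev{\text{repl.}_1}\) (\resp \(\Rev{\text{repl.}_2}\)) the process can also undo the entire replication step in one go, yielding a backward transition labelled \(!\theta_1'\) that collapses \({!X'} \mid X_1''\) back to \(!X'\). That transition is not derived through \(\Rev{\mid_{\R}}\) and is not \(\mid_{\R}\)-prefixed, so your enumeration misses it; the paper explicitly lists both candidates, \(\theta_2 = {!\theta_1'}\) or \(\mid_{\R}\theta_1'\). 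The gap is easily repaired: the missing case is killed by the axiom \(!\theta \dep {!\theta}\) (with reflexivity of \(\dep\)), so it too contradicts \(\theta_1 \conc \theta_2\) and the vacuity conclusion stands. You should add this case explicitly; the rest of your argument--uniqueness of the \(\mid_{\R}\)-prefixed reverse via the key side-conditions of the \(\Rev{\mid_{d}}\) rules, and its dependency on \(!\theta_1'\) through the catch-all branch of \(!\theta \dep \mid_{\R}\theta'\)--matches the paper's reasoning and is correct.
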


\begin{proof}
	As for the (non-extended) sideways diamond (\autoref{thm:side}), we can reuse the proof of the extended forward diamond almost as it is.
	An important case is if \(X\) is of the form \(!X'\), then \(!X' \redl{!\theta_1'} !X' \mid X_1'\) and \(\std(X')\), so the only transition \(X_1'\) can backtrack on is \(\theta_1'\).
	But \(\theta_2 = !\theta_1'\) or \(\mid_{\R}\theta_1'\) contradicts \(\theta_1 \conc \theta_2\), hence there are no backward transition following \(!\theta_1'\) not concurrent with it. %
\end{proof}

Finally, the loop lemma (\autoref{lemma:loop}) and hence the square property (\autoref{thm:sp})--under \(\sigma\)-image--are straightforward in our extended calculus, thanks to the same argument as in the non-extended case.

\subsection{\ldots as Well as Causal Consistency}
\label{ssec:causal-infinite}

Causal consistency (\autoref{thm:cc}) follows from the square property--established previously--, well-foundedness--which still holds, as \(!\) introduces infinite \emph{forward} behaviors but let the number of keys decreases strictly when backtracking--, but also that backward transitions are concurrent, a property we are missing in our extended calculus.
Indeed, consider \eg the reachable process \(X = !a.P \mid a[m].P\), we have \(X \revredl{\protect{!a[m]}} !a.P\) \emph{and} \(X \revredl{\protect{\mid_{\R} a[m]}} !a.P \mid a.P\), and since \(!a[m] \dep \mid_{\R} a[m]\), %
those two different traces are \emph{not} concurrent--this dependency was actually useful in proving \autoref{thm:sideext}.
There are two ways of restoring this needed property:
\begin{enumerate}
	\item One could argue that, since \(\sigma (!a[m]) = \mid_{\R}a[m]\), those two transitions are \emph{the same} up to the structural congruence \(!P \mid P \congru !P\)~\cite[Definition 12]{Busi2009},
	\item One could \enquote{mark} the key \(m\) so that it cannot backtrack independently, thus forbidding the \(\mid_{\R}a[m]\)-transition.
\end{enumerate}

We implement here the second solution, but would like the reader to appreciate that \emph{in all the development we carried so far, processes that can perform two conflicting backward transitions were not an issue}.
We believe this leeway could enable other formalisms or solutions to restore causal consistency.
Implementing this \enquote{marking} of keys resulting from a replication requires some adjustments:
\begin{enumerate}
	\item A key can now be either \emph{unmarked} \(\{m,n,\dots\}\) or \emph{marked} \(\{m^!,n^!,\dots\}\) 
	\item The forward and backward proved LTS (\autoref{fig:provedltsrulesccskfw})  now have the rules
	\begin{center}
		\begin{prooftree}
			\hypo{X \redl{\theta} X'}
			\infer[right label={\(\kay(\theta) = k \)}]1[repl.\(_1\)]{!X \redl{!\theta} !X \mid \markr(X')}
		\end{prooftree}
		\hfill
		\begin{prooftree}
			\hypo{X \redl{\protect{\theta_{\L}\lambda[k]}} X'}
			\hypo{X \redl{\protect{\theta_{\R} \out{\lambda}[k]}} X''}
			\infer[]2[repl.\(_2\)]{!X \redl{\protect{!\langle \mid_{\L} \theta_{\L} \lambda \protect{[k]}, \mid_{\R} \theta_{\R} \out{\lambda} [k] \rangle}} !X \mid \markr(X' \mid X'')}
		\end{prooftree}
	\end{center}
	and their inverses, for \(\markr(\alpha[m].P) = \alpha[m^!].P\), \(\markr(0) = 0\) and \(\markr(X)\) being simply \(\markr\) applied to the sub-term(s) of \(X\) recursively.
	\item Transitions \(X \revredl{\theta} X'\) are \emph{admissible} only if \(\kay(\theta)\) is un-marked in \(X'\), and we reason only on \emph{admissible} transitions in our lemmas and theorems.
\end{enumerate}

This forbids \enquote{just enough} to restore concurrency between all coinitial backward transitions, without altering anything else.
As a consequence, we can use again the axiomatic approach~\cite[Proposition 3.5]{Lanese2020} to obtain causal consistency (up to \(\sigma\)-image) for our extended system.
A possible drawback is that this solution comes at the price of compositionality: it is possible that \eg \(a[m^!].X\) does not have an admissible backward reduction, but \(!(a.X) \mid a[m^!].X\) does.
There seems to be no way of avoiding this \enquote{global} dimension of backtracking on replication.

\section{Conclusion and Perspectives}

Our two mains contributions are intertwined: infinite behaviors are a good test for concurrency definitions, which in turns should preserve the independence of \(!\)-transitions \wrt other transitions.
In this sense, finding definitions for concurrency and the replication operator that are suited for reversible systems was challenging, but we believe this work offers a solid, motivated answer to both questions.
By adapting a forward-only definition of concurrency, extending it to backward transitions \enquote{for free} and proving that expected properties such as the diamonds and squares hold, we provided an ingenious solution to a problem that have been left up in the air for too long.
Natural follow-ups include proving that this definition coincide with semantics-inspired~\cite{Graversen2021} and other syntactical~\cite[Definition 20]{Aubert2021d} representations of concurrency for other reversible systems.

Infinite behaviors in the presence of reversibility is not well-understood nor studied. %
Even if reversible primitive functions allowed a first exploration of the interplay of both notions~\cite{Paolini2018}, to our knowledge no systematic study of how replication can or cannot preserve \eg causal consistency had been led until this work.
Some attempts to extend algebras of communicating processes~\cite{Baeten2005}, including recursion, seems to have been unsuccessful~\cite{Wang2016}.
An interesting follow-up would be to define recursion and iteration as well in \ccsk, and to attempt to reconstruct the separation results from the forward-only paradigm~\cite{Palamidessi2005}: whether finer, \enquote{reversible}, equivalences can preserve this distinction despite the greater flexibility provided by backward transitions is an open problem.
Another interesting point is the study of infinite behaviors that duplicate past events, including their keys or memories: %
whether this formalism could preserve causal consistency, or what benefits there would be in tinkering this property, is also an open question.

Last but not least, both investigations would require to define relevant properties.
In the forward world, termination or convergence were used to compare infinite behavior~\cite{Palamidessi2005}, and additional criteria were introduced to study causal semantics~\cite{Cristescu2015b}.
Those properties may or may not be suited for reversible systems, but it is difficult to decide as they sometimes even lack a definition.
This could help in solving the more general question of deciding \emph{what} it is that we want to observe and assess when evaluating reversible, concurrent systems~\cite{Aubert2021h}.

\clearpage

\bibliographystyle{splncs04}
\bibliography{bib/bib}
\appendix

\section{Omitted Proofs}
\label{sec:proofs}

\lempartoftrans*

\begin{proof}
		The trace \(\pi_d(T)\) exists by virtue of the rule \(\mid_{d}\), syn.\ or their reverses.
		What remains to prove is that \(\pi_{d}(\theta_1) \conc_{\pi_{d}(T)} \pi_{d}(\theta_2)\) holds.
		
	The proof is by case on \(\theta_1\) and \(\theta_2\), but always follows the same pattern.
	As we know that both \(\pi_{d}(\theta_1)\) and \(\pi_{d}(\theta_2)\) need to be defined, there are 7 cases:

{\small
	\[
	\begin{tabu}{r || c | c |c | c |c | c |c | c |}
	\theta_1 & \mid_{\L}\theta_1' &  \mid_{\L}\theta_1' &  \mid_{\R}\theta_1' & \mid_{\R}\theta_1' & \langle \mid_{\L} \theta_1', \mid_{\R} \theta_1''\rangle & \langle \mid_{\L} \theta_1', \mid_{\R} \theta_1''\rangle & \langle \mid_{\L} \theta_1', \mid_{\R} \theta_1''\rangle \\
	\hline
	\theta_2 & \mid_{\L}\theta_2'& \langle \mid_{\L} \theta_2', \mid_{\R} \theta_2''\rangle & \mid_{\R}\theta_2'& \langle \mid_{\L} \theta_2', \mid_{\R} \theta_2''\rangle & \mid_{\L}\theta_2' & \mid_{\R}\theta_2'& \langle \mid_{\L} \theta_2', \mid_{\R} \theta_2''\rangle 
	\end{tabu}
	\]
}
By symmetry, we can bring this number down to three:
\[
\begin{tabu}{r || c | c | c | }
\text{(case letter)} & \textbf{a)} & \textbf{b)} & \textbf{c)} \\
\hline 
\theta_1 & \mid_{\L}\theta_1' & \langle \mid_{\L} \theta_1', \mid_{\R} \theta_1''\rangle & \langle \mid_{\L} \theta_1', \mid_{\R} \theta_1''\rangle\} \\
\hline
\theta_2 & \mid_{\L}\theta_2'& \mid_{\L}\theta_2'& \langle \mid_{\L} \theta_2', \mid_{\R} \theta_2''\rangle\}
\end{tabu}
\]
In each case, assume \(\pi_{\L}(\theta_1) = \theta_1' \conc_{\pi_{\L}(T)} \theta_2' = \pi_{\L}(\theta_2)\) does not hold.
Then it must be the case that either \(\theta_1' \dep_{\pi_{\L}(T)} \theta_{2}'\) or \(\theta_2' \dep_{\pi_{\L}(T)} \theta_{1}'\), and the two can be treated the same way, thanks to symmetry, and we detail this case for each of our three cases:
\begin{description}
	\item[a)] 
			If \(\theta_1' \dep_{\pi_{\L}(T)} \theta_{2}'\), then \(\theta_1' \dep \theta_{2}'\), and it is immediate that \(\theta_1 = \mid_{\L}\theta'_1 \dep_{T} \mid_{\L}\theta_2' = \theta_{2}\), contradicting \(\theta_1 \conc_{T} \theta_2\).
	\item[b)]
			If \(\theta_1' \dep_{\pi_{\L}(T)} \theta_{2}'\), then \(\theta_1' \dep \theta_{2}'\),  \(\mid_{\L}\theta_1' \dep \mid_{\L} \theta_{2}'\) and  \(\langle \mid_{\L}\theta_1', \mid_{\R}\theta''_1\rangle \dep \mid_{\L} \theta_{2}'\), from which we can deduce \(\theta_1 \dep_{T} \theta_{2}\), contradicting \(\theta_1 \conc_{T} \theta_2\).
	\item[c)]
			If \(\theta_1' \dep_{\pi_{\L}(T)} \theta_{2}'\), then \(\theta_1' \dep \theta_{2}'\),  \(\mid_{\L}\theta_1' \dep \mid_{\L} \theta_{2}'\) and  \(\langle \mid_{\L}\theta_1', \mid_{\R}\theta''_1\rangle \dep \langle \mid_{\L} \theta_{2}', \mid_{\R} \theta_2'\rangle\), from which we can deduce \(\theta_1 \dep_{T} \theta_{2}\), contradicting \(\theta_1 \conc_{T} \theta_2\).
\end{description}
Hence, in all cases, assuming that \(\pi_{d}(\theta_1) \conc_{\pi_{d}(T)} \pi_{d}(\theta_2)\) does not hold leads to a contradiction.%
\end{proof}

\thmfwdiamond*

\begin{proof}
	The proof proceeds by induction on the length of the deduction for the derivation for \(X \redl{\theta_1} X_1\).
	\begin{description}
		\item[Length \(1\)]
		In this case, the derivation is a single application of act., and \(\theta_1\) is of the form \(\alpha[k]\). 
		But \(\alpha[k] \conc \theta_2\) cannot hold, as \(\alpha[k] \dep \theta_2\) always holds, and this case is vacuously true.
		\item[Length \(> 1\)]
		We proceed by case on the last rule.
		\begin{description}
			\item[pre.] There exists \(\alpha\), \(k\), \(X'\) and \(X_1'\) \st  \(X = \alpha[k].X' \redl{\theta_1} \alpha[k].X_1' = X_1\) and that \(\kay(\theta_1) \neq k\).
			As \(\alpha[k].X_1' \redl{\theta_2} Y\) we know that \(\kay(\theta_2) \neq k\)~\cite[Lemma 3.4]{Lanese2021}, and we can apply \autoref{lem:rm} twice to obtain
			\[\rem^{\alpha}_k (\alpha[k].X') = X' \redl{\theta_1} \rem^{\alpha}_k (\alpha[k].X'_1) = X'_1 \redl{\theta_2} \rem^{\alpha}_k (Y)\]
			As \(\theta_1 \conc \theta_2\) by hypothesis, we can use the induction hypothesis to obtain that there exists \(X_2\) \st  \(X' \redl{\theta_2} X_2 \redl{\theta_1} \rem^{\alpha}_k (Y)\).
			Since \(\kay(\theta_2) \neq k\), we can append pre.\ to the derivation of \(X' \redl{\theta_2} X_2\) to obtain \(\alpha[k] . X' = X \redl{\theta_2} \alpha[k]. X_2\).
			Using \autoref{lem:rm} one last time, we obtain that \(\rem^{\alpha}_k(\alpha[k].X_2) = X_2 \redl{\theta_1} \rem^{\alpha}_k(Y)\) implies \(\alpha[k].X_2 \redl{\theta_1} Y\), which concludes this case.
			\item[res.] This is immediate by induction hypothesis.
			\item[\(\mid_{\L}\)]
			There exists \(X_{\L}\), \(X_{\R}\), \(\theta\), \(X_{1_{\L}}\), and \(Y_{\L}\), \(Y_{\R}\) \st  \(X \redl{\theta_1} X_1 \redl{\theta_2} Y\) is
			\[X_{\L} \mid X_{\R} \redl{\mid_{\L} \theta} X_{1_{\L}} \mid X_{\R} \redl{\theta_2} Y_{\L} \mid Y_{\R}\text{.}\]
			Then, \(X_{\L} \redl{\theta} X_{1_{\L}}\) and the proof proceeds by case on \(\theta_2\):
			\begin{description}
				\item[\(\theta_2\) is \(\mid_{\R} \theta'\)] %
				Then \(X_{\R} \redl{\theta'} Y_{\R}\), \(X_{1_{\L}} = Y_{\L}\) and the occurences of the rules \(\mid_{\L}\) and \(\mid_{\R}\) can be \enquote{swapped} to obtain 
				\[X_{\L} \mid X_{\R} \redl{\mid_{\R} \theta'} X_{\L} \mid Y_{\R} \redl{\mid_{\L}\theta} Y_{\L} \mid Y_{\R}\text{.}\]
				\item[\(\theta_2\) is \(\mid_{\L} \theta'\)] Then, \(X_{\L} \redl{\theta} X_{1_{\L}} \redl{\theta'} Y_{\L}\) and \(X_{\R} = Y_{\R}\).
				As \(\mid_{\L}\theta = \theta_1 \conc \theta_2 = \mid_{\L}\theta'\), it is the case that \(\theta \conc \theta'\) in \(X_{\L} \redl{\theta} X_{1_{\L}} \redl{\theta'} Y_{\L}\) by \autoref{lem:part_of_trans}, and we can use induction to obtain \(X_2\) \st  \(X_{\L} \redl{\theta'} X_2 \redl{\theta} Y_{\L}\), from which it is immediate to obtain \(X_{\L} \mid X_{\R} \redl{\mid_{\L} \theta'} X_2 \mid X_{\R} \redl{\mid_{\L}\theta} Y_{\L} \mid X_{\R} = Y_{\L} \mid Y_{\R}\)
				\item[\(\theta_2\) is \(\langle \mid_{\L} \theta_{\L}, \mid_{\R} \theta_{\R}\rangle\)] Since \(\mid_{\L}\theta = \theta_1 \conc \theta_2 = \langle \mid_{\L} \theta_{\L}, \mid_{\R} \theta_{\R}\rangle\), we have that \(\theta \conc \theta_{\L}\) in \(X_{\L} \redl{\theta} X_{1_{\L}} \redl{\theta_{\L}} Y_{\L}\) by \autoref{lem:part_of_trans}.
				Hence, we can use induction to obtain \(X_{\L} \redl{\theta_{\L}} X_2 \redl{\theta} Y_{\L}\).
				Since we also have that \(X_{\R} \redl{\theta_{\R}} Y_{\R}\), we can compose both traces using \emph{first} syn., then \(\mid_{\L}\) to obtain
				\[X_{\L} \mid X_{\R} \redl{\langle \mid_{\L} \theta_{\L}, \mid_{\R} \theta_{\R}\rangle} X_2 \mid Y_{\R} \redl{\mid_{\L}\theta} Y_{\L} \mid Y_{\R}\text{.}\]
			\end{description}
			\item[\(\mid_{\R}\)] This is symmetric to \(\mid_{\L}\).
			\item[syn.]
			There exists \(X_{\L}\), \(X_{\R}\), \(\theta_{\L}\), \(\theta_{\R}\), \(X_{1_{\L}}\), \(X_{1_{\R}}\), \(Y_{\L}\) and \(Y_{\R}\) \st  \(X \redl{\theta_1} X_1 \redl{\theta_2} Y\) is
			\[X_{\L} \mid X_{\R} \redl{\langle \mid_{\L} \theta_{\L}, \mid_{\R} \theta_{\R} \rangle} X_{1_{\L}} \mid X_{1_{\R}} \redl{\theta_2} Y_{\L} \mid Y_{\R}\text{.}\]
			Then, \(X_{\L} \redl{\theta_{\L}} X_{1_{\L}}\), \(X_{\R} \redl{\theta_{\R}} X_{1_{\R}}\) and the proof proceeds by case on \(\theta_2\):
			\begin{description}
				\item[\(\theta_2\) is \(\mid_{\R} \theta_{\R}'\)] %
				Then \(X_{1_{\R}} \redl{\theta_{\R}'} Y_{\R}\), \(X_{1_{\L}} = Y_{\L}\) and \(\langle \mid_{\L} \theta_{\L}, \mid_{\R} \theta_{\R} \rangle \conc \mid_{\R} \theta_{\R}'\) implies \(\theta_{\R} \conc \theta_{\R}'\) in \(X_{\R} \redl{\theta_{\R}} X_{1_{\R}} \redl{\theta_{\R}'} Y_{\R}\) by \autoref{lem:part_of_trans}.
				We can then use the induction hypothesis to obtain \(X_{\R} \redl{\theta_{\R}'} X_{2_{\R}} \redl{\theta_{\R}} Y_{\R}\) from which it is immediate to obtain \(X_{\L} \mid X_{\R} \redl{\mid_{\R} \theta_{\R}'} X_{\L} \mid X_{2_{\R}}  \redl{\langle \mid_{\L} \theta_{\L}, \mid_{\R} \theta_{\R} \rangle} X_{1_{\L}}  \mid Y_{\R} = Y_{\L} \mid Y_{\R}\).
				
				\item[\(\theta_2\) is \(\mid_{\L} \theta_{\L}'\)] This is symmetric to the previous one.
				\item[\(\theta_2\) is \(\langle \mid_{\L} \theta_{\L}', \mid_{\R} \theta_{\R}' \rangle\)] This case is essentially a combination of the two previous cases.
				 Since \(\langle \mid_{\L} \theta_{\L}', \mid_{\R} \theta_{\R}' \rangle = \theta_1 \conc \theta_2 = \langle \mid_{\L} \theta_{\L}, \mid_{\R} \theta_{\R}\rangle\), \autoref{lem:part_of_trans} gives us two traces
				 \begin{align*}
				 X_{\L} \redl{\theta_{\L}} X_{1_{\L}} \redl{\theta_{\L}'} Y_{\L}\qquad \text{and} \qquad 
 				 X_{\R} \redl{\theta_{\R}} X_{1_{\R}} \redl{\theta_{\R}'} Y_{\R}
				 \end{align*}
				 where \(\theta_{\L} \conc \theta_{\L}'\) and \(\theta_{\R} \conc \theta_{\R}'\), respectively.
				 By induction hypothesis, we obtain two traces 
				  \begin{align*}
				 X_{\L} \redl{\theta_{\L}'} X_{2_{\L}} \redl{\theta_{\L}} Y_{\L}\qquad \text{and} \qquad 
				 X_{\R} \redl{\theta_{\R}'} X_{2_{\R}} \redl{\theta_{\R}} Y_{\R}
				 \end{align*}
				 that we can then re-combine using syn.\ twice to obtain, as desired,
				 \[X_{\L} \mid X_{\R} \redl{\langle \mid_{\L} \theta_{\L}', \mid_{\R} \theta_{\R}' \rangle} X_{2_{\L}} \mid X_{2_{\R}} \redl{\langle \mid_{\L} \theta_{\L}, \mid_{\R} \theta_{\R} \rangle} Y_{\L} \mid Y_{\R}\text{.}\]
			\end{description}
			
			\item[\(+_{\L}\)] 
			There exists \(X_{\L}\), \(X_{\R}\), \(X_{\L}^1\), and \(Y_{\L}\) \st  \(X \redl{\theta_1} X_1 \redl{\theta_2} Y\) is
			\[X_{\L} + X_{\R} \redl{\theta_1} X_{\L}^1 + X_{\R} \redl{\theta_2} Y_{\L} + X_{\R}\text{.}\]
			Note that we know all transitions happen on \enquote{\(X_{\L}\)'s side} and \(X_{\R}\) remains unchanged as otherwise we could not sum two non-standard terms.
			Then, \(X_{\L} \redl{\theta_1} X_{\L}^1 \redl{\theta_2} Y_{\L}\), and as \(\theta_1 \conc \theta_2\) in this transition as well, we can use the induction hypothesis to obtain \(X_2\) \st \(X_{\L}  \redl{\theta_2} X_2 \redl{\theta_1} Y_{\L}\).
			From this, it is easy to obtain \(X_{\L} + X_{\R} \redl{\theta_2} X_2 + X_{\R} \redl{\theta_1} Y_{\L} + X_{\R}\) and this concludes this case.
			\item[\(+_{\R}\)] This is symmetric to \(+_{\L}\).
		\end{description}
	\end{description}
\end{proof}

\bwconc*

\begin{proof}
	The first important fact to note is that \(\kay(\theta_1) \neq \kay(\theta_2)\): by a simple inspection of the backward rules in \autoref{fig:provedltsrulesccskfw}, it is easy to observe that if a reachable process \(X\) can perform two different backward transitions, then they must have different keys.
	Then, the proof is by induction on the length of \(\theta_1\): while the lengths of the prefix and of the terms are not correlated, the result exposed in \autoref{app:sub-term}--and briefly reminded in the proof below-- allows to infer the structure of \(X\) from that of \(\theta_1\) and \(\theta_2\). 
\begin{description}
	\item[\(\theta_1\) is of length \(1\)]
	Then, \(\theta_1\) is a prefix and \(X\) is either 
	\begin{itemize}
		\item of the form \(\alpha[k].X'\) with \(\std(X')\), but in that case \(X\) cannot have two different backward transitions, so this case is vacuously true,
		\item of the form \(X' + Y\) with \(\std(Y)\) (or the symmetric), of the form \(X' \bs a\), or of the form \(\alpha[k].X'\) with \(X'\) not  standard: in all those cases, one can simply \enquote{extract} from the derivations the sub-trees that actually produce the \(\theta_1\) or \(\theta_2\) labels--as opposed to the rules that simply propagate it--, reason about it, and then restore the rest of the tree.
		As the concerned rules (\(\Rev{+_{\R}}\), \(\Rev{+_{\L}}\), \Rev{res.}\ and \Rev{pre.}) do not change the label, the dependencies or the derivability of the sub-tree, we can safely remove them, apply the previous reasoning to the \Rev{act.} rules that actually introduced the \(\theta_1\) and \(\theta_2\) labels, and restore the tree.
	\end{itemize}

	\item[\(\theta_1\) is of length greater than \(1\)]
		We proceed by case on the structure of \(\theta_1\), and use freely the previous remark--again, made formal in \autoref{app:sub-term}--that makes it possible to \enquote{skip over} \(+\), \(\bs a\) and \(\alpha[k]\) operators to focus on the sub-tree that \emph{actually} produces the labels: as an example, it is immediate that reasoning on a term \(X\) of the form \(Y_{\L} + Y_{\R}\) amounts to reasoning on \(Y_{\L}\) or \(Y_{\R}\), whichever is not standard, and that the other part of the process play no significant role in the proof.
	\begin{description}
		\item[\(\theta_1\) is of the form \(\mid_{\L}\theta_1'\)]
		
			Then we proceed by case on \(\theta_2\).
			\begin{description}
			\item[\(\theta_2\) is \(\mid_{\L}\theta_2'\)] Then, it holds by induction hypothesis after decomposing and re-composing both traces: we have
			\[X_{\L} \mid X_{\R} \revredl{\mid_{\L}\theta_1'} X_{1_{\L}} \mid X_{\R} \qquad \text{and}\qquad  X_{\L} \mid X_{\R} \revredl{\mid_{\L}\theta_2'} X_{2_{\L}} \mid X_{\R}\]
			and since \(\kay(\theta_1) \neq \kay(\theta_2)\), \(\theta_1' \neq \theta_2'\), so we can use the induction hypothesis to obtain that \(X_{\L} \revredl{\theta_1'} X_{1_{\L}}\) and \(X_{\L} \revredl{\theta_2'} X_{2_{\L}}\) are concurrent, from which we easily deduce that \(t_1: X\revredl{\theta_1}X_1\) and \(t_2:X\revredl{\theta_2}X_2\) are concurrent.
			\item[\(\theta_2\) is \(\mid_{\R}\theta_2'\)] Then, it is immediate, as \(\mid_{\L} \theta_1' \dep \mid_{\R}\theta_2'\) never hold.
			\item [\(\theta_2\) is \(\langle \mid_{\L}\theta_2', \mid_{\R} \theta_2''\rangle\)]
			Then we must prove that 
			\[X_{\L} \mid X_{\R} \revredl{\mid_{\L}\theta_1'} X_{1_{\L}} \mid X_{\R} \quad \text{and}\quad  X_{\L} \mid X_{\R} \revredl{\langle \mid_{\L}\theta_2', \mid_{\R} \theta_2''\rangle} X_{2_{\L}} \mid X_{2_{\R}}\]
			are concurrent.
			As we know that \(\kay(\theta_1) \neq \kay(\theta_2)\), \(\theta_1' \neq \theta_2'\), and we can use the induction hypothesis to get that 
			\[X_{\L} \revredl{\theta_1'} X_{1_{\L}} \qquad \text{and}\qquad  X_{\L} \revredl{\theta_2'} X_{2_{\L}}\text{,}\]
			are concurrent
			It is obvious that \(\mid_{\L}\theta_1' \dep \mid_{\R} \theta_2''\) does not hold, hence \(\mid_{\L}\theta_1' \conc \langle \mid_{\L} \theta_2', \mid_{\R} \theta_2''\rangle\) in \(X_{1_{\L}} \mid X_{\R} \redl{\mid_{\L}\theta} X_{\L} \mid X_{\R} \revredl{\langle \mid_{\L} \theta_2', \mid_{\R} \theta_2''\rangle} X_{2_{\L}} \mid X_{2_{\R}}\), and this concludes this case by \autoref{def:co-init-conc}.
		\end{description}	
		\item[\(\theta_1\) is of the form \(\mid_{\R}\theta_1'\)] This case is similar to the previous one.
		\item[\(\theta_1\) is of the form \(\langle \mid_{\L}\theta_1', \mid_{\R} \theta_1''\rangle\)] Then we proceed by case on \(\theta_2\) in a similar fashion, and this case offers no resistance.
\end{description}	
\end{description}	
\end{proof}

\thmfwdiamondext*

\begin{proof}
	We can re-use the proof for \autoref{thm:forward} immediately, since \(\sigma (\theta) = \theta\) in all the cases we had, and since \(\theta_2\) cannot be of the form \(!\theta\) in the cases where we have to decompose it.
	But we now need to consider the case where repl.\(_1\) or repl.\(_2\) is the last rule of \(\theta_1\)'s derivation.

	For repl.\(_1\), in a nutshell, if \(!X \redl{!\theta'_1} !X \mid X_1 \redl{\theta_2} Y_{\L} \mid Y_{\R}\) with \(!\theta'_1 \conc \theta_2\), then \(\theta_2\) cannot be of the form \(\mid_{\L}\theta_2'\) nor \(\langle \theta_{\L}, \theta_{\R}\rangle\), as  \(!\theta_1' \conc \theta_2\) would not hold in those cases, by \autoref{dep-extd} on p.~\pageref{dep-extd}.
	Since \(\theta_2\) cannot be of the form \(!\theta_2'\) either (as \(!X \mid X_1\) cannot make such a transition), it must be of the form \(\mid_{\R} \theta_2'\), with \(\theta_2' \conc \theta_1'\) in \(X \redl{\theta'_1} X_1 \redl{\theta_2'} Y_{\R}\)--otherwise, \(!\theta_1' = \theta_1 \conc \theta_2 = \mid_{\R}\theta_2'\) would not hold--, and hence \(Y_{\L} = !X\).
	By induction hypothesis, we obtain a transition \(X \redl{\hat{\theta}_2'} X_2 \redl{\hat{\theta}'_1} Y_{\R}\) with \(\sigma(\hat{\theta}_2') = \sigma(\theta_2')\) and \(\sigma(\hat{\theta}_1') = \sigma(\theta_1')\), from which it is easy to obtain \(!X \redl{!\hat{\theta}_2'} !X \mid X_2 \redl{\mid_{\R}\hat{\theta}_1'} !X \mid Y_{\R}\) as desired, with
	\begin{align*}
	\sigma(!\hat{\theta}_2') &= \mid_{\R}\sigma(\hat{\theta}_2')  = \mid_{\R} \sigma(\theta_2')  = \sigma(\mid_{\R} \theta_2')  = \sigma(\theta_2)
	\shortintertext{and}
	\sigma(\mid_{\R}\hat{\theta}'_1) & = \mid_{\R} \sigma(\hat{\theta}'_1) = \mid_{\R} \sigma(\theta_1') = \sigma(! \theta_1') = \sigma(\theta_1)
	\end{align*}
		
	For repl.\(_2\), if \(!X \redl{!\langle \mid_{\L} \theta_{\L}, \mid_{\R} \theta_{\R} \rangle } !X \mid (X_1 \mid X_2) \redl{\theta_2} Y_{\L} \mid Y_{\R}\), then observe that \(\theta_2\) cannot be of the form \(\mid_{\L}\theta_2'\) nor \(\langle \theta_{\L}', \theta_{\R}'\rangle \) without contradicting \(\theta_1 \conc \theta_2\), and that it cannot be of the form \(!\theta_2'\) neither as the top connector in \(!X \mid (X_1 \mid X_2)\) is not \(!\).
	Hence, it must be of the form \(\mid_{\R} \theta_{\R}'\), with \(\theta_{\R}'\) not of the form \(\mid_{\L}\theta_{\R}''\) nor \(\mid_{\R}\theta_{\R}''\) (following \autoref{rem:congruence}), nor \(!\theta_{R}''\) (as the top connector of \(X_1 \mid X_2\) is not \(!\)).
	Hence, \(\theta_2\) is of the form \(\mid_{\R} \langle \mid_{\L} \theta_{\L}', \mid_{\R} \theta_{\R}'\rangle\), and it follows by induction hypothesis: as \(\theta_1 \conc \theta_2\) in 
	\begin{align*}
	!X \redl{!\langle \mid_{\L} \theta_{\L}, \mid_{\R} \theta_{\R} \rangle } !X \mid (X_1 \mid X_2) \redl{\mid_{\R}\langle \mid_{\L} \theta_{\L}', \mid_{\R} \theta_{\R}' \rangle } !X \mid (X_1' \mid X_2')
	\shortintertext{it must be the case that \(\theta_{\L} \conc \theta_{\L}'\) and \(\theta_{\R} \conc \theta_{\R}'\) in}
	X \redl{\theta_L} X_1 \redl{\theta_{\L}'} X_1' \qquad \text{and}\qquad X \redl{\theta_R} X_2 \redl{\theta_{\R}'} X_2'
	\shortintertext{using the induction hypothesis twice gives us \(X_1''\), \(X_2''\), \(\hat{\theta}_{\L}\),  \(\hat{\theta}_{\L}'\), \(\hat{\theta}_{\R}\) and  \(\hat{\theta}_{\R}'\) \st }
	X \redl{\hat{\theta}_L'} X_1'' \redl{\hat{\theta}_{\L}} X_1' \qquad \text{and}\qquad  X \redl{\hat{\theta}'_R} X_2'' \redl{\hat{\theta}_{\R}} X_2'
	\shortintertext{from which we can obtain}
	!X \redl{!\langle \mid_{\L} \hat{\theta}'_{\L}, \mid_{\R} \hat{\theta}'_{\R} \rangle } !X \mid (X_1'' \mid X_2'') \redl{\mid_{\R}\langle \mid_{\L} \hat{\theta}_{\L}, \mid_{\R} \hat{\theta}_{\R} \rangle } !X \mid (X_1' \mid X_2')
	\end{align*}
	and as \(\sigma(!\theta) = \mid_{\R}\sigma(\theta)\), the condition on \(\sigma\) follows easily.
\end{proof}

\section{Technical Lemma on Sub-terms}
\label{app:sub-term}

This brief section exposes why we can, \wlg, assume that a term capable of performing a transition labeled by a prefix (\resp prefixed by \(\mid_{\L}\), \(\mid_{\R}\), \(\langle \theta_{\L}, \theta_{\R}\rangle\), \resp prefixed by \(!\)) can always be assumed to have for primary connector the same prefix (\resp a parallel composition, \resp a replication).
The lemma is not very insightful, but easy and critical in conducting the proof of \autoref{lem:bwconc}.

	\begin{definition}[Sub-term]
	Given \(X\), \(Y\), \(X\) is \emph{a subterm of \(Y\)}, \(X \subseteq Y\) if 
	\begin{align*}
		X = Y && X = Y + Y' && X = Y' + Y && X = Y \bs \alpha && X = a[k].Y
	\end{align*}
\end{definition}

\begin{lemma}
	\label{lem:decompose}
	If \(X \frevredl{\theta_1} Y\), then \(\exists X', Y'\) \st \(X' \subseteq X\), \(Y' \subseteq Y \), \(X' \frevredl{\theta_1} Y'\), and 
	\begin{itemize}
		\item if \(\theta_1\) is of the form \(\alpha[k]\), then the primary connector of \(X'\) is the prefix \(\alpha\) if the transition is forward, \(\alpha[k]\) if it is backward,
		\item if \(\theta_1\) is of the form \(\mid_{\L}\theta_1'\), \(\mid_{\R}\theta_1'\), or \(\langle \mid_{\L} \theta_{\L}, \mid_{\R} \theta_{\R}\rangle \), then the primary connector of \(X'\) is the parallel composition,
		\item if \(\theta_1\) is of the form \(! \theta_1'\), then the primary connector of \(X'\) is the replication.
	\end{itemize}
	Furthermore, if \(Y \frevredl{\theta_2} Z\) with \(\theta_1 \conc \theta_2\) in \(X \frevredl{\theta_1} Y \frevredl{\theta_2} Z\), then \(\exists Z'\) \st \(Y' \frevredl{\theta_2} Z'\) with \(Z' \subseteq Z\) and \(\theta_1 \conc \theta_2\) in  \(X' \frevredl{\theta_1} Y' \frevredl{\theta_2} Z'\).
\end{lemma}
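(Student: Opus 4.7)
I would proceed by structural induction on the derivation of \(X \frevredl{\theta_1} Y\). The rules of the proved LTS fall into three groups: (i) the axioms act.\ and \(\Rev{\text{act.}}\); (ii) the \emph{wrapper} rules pre., res., \(+_{\L}\), \(+_{\R}\) and their reverses, which propagate \(\theta_1\) unchanged through one layer of the term; and (iii) the \emph{productive} rules \(\mid_{\L}\), \(\mid_{\R}\), syn., repl.\(_1\), repl.\(_2\) and their reverses, which actually build the outermost symbol of the enhanced keyed label \(\theta_1\).

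For groups (i) and (iii) take \(X' = X\) and \(Y' = Y\): in group (i) the primary connector of \(X\) is already the prefix matching \(\theta_1 = \alpha[k]\), and in group (iii) the applicable rule forces the primary connector to be a parallel composition (for \(\mid_{\L}\), \(\mid_{\R}\), syn.) or a replication (for repl.\(_1\), repl.\(_2\)), matching the outer shape of \(\theta_1\). In both cases the second part of the lemma is trivial, since \(Y = Y'\) can perform any transition \(Y\) can. For group (ii), the last rule peels off exactly one layer, exposing an immediate sub-derivation \(X_0 \frevredl{\theta_1} Y_0\) with \(X_0\) a syntactic sub-term of \(X\); I would invoke the induction hypothesis on this inner derivation to obtain \(X' \subseteq X_0\) and \(Y' \subseteq Y_0\) satisfying the primary-connector clause, and conclude by transitivity of \(\subseteq\).

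For the second part in group (ii), I would carry the additional transition \(Y \frevredl{\theta_2} Z\) along the same induction. The crucial observation is that the concurrency assumption \(\theta_1 \conc \theta_2\) restricts which last rule can be used for \(Y \frevredl{\theta_2} Z\): the alternatives to descending through the same wrapper are either an act.\ or \(\Rev{\text{act.}}\) consuming the wrapping prefix, or a \(+_{\R}\)-type switch to the sibling sum branch. The former produces a top-level label \(\alpha[k]\), and the dependency axiom \(\alpha[k] \dep \theta\) then forces \(\theta_2 \dep \theta_1\), contradicting concurrency; the latter is blocked by the standardness side condition on the sibling branch of the sum, which no longer holds after a forward \(\theta_1\), and otherwise again exposes an atomic leaf to which the axiom \(\alpha[k] \dep \theta\) applies. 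Hence \(\theta_2\) must be derivable inside \(Y_0\), and the IH yields \(Z_0 \subseteq Z\) with \(Y_0 \frevredl{\theta_2} Z_0\); the concurrency relation between the inner labels is inherited because \(\dep\) is defined label-wise, without reference to surrounding context.

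The main obstacle is not the case analysis itself but the careful verification that the side conditions of each wrapper rule (the standardness of the sibling sum, the key-freshness for pre., the restriction-name clause for res.) are preserved when descending along \(\theta_2\)'s derivation. For pre.\ in particular, one needs \(\kay(\theta_2) \neq k\) for the wrapping key \(k\); this holds either by the explicit side condition of the pre.\ rule used in \(Y \frevredl{\theta_2} Z\)'s derivation, or --- in the edge case of a \(\Rev{\text{act.}}\) peeling off the prefix --- as an immediate consequence of the concurrency argument above, since then \(\theta_2 = \alpha[k]\) and \(\alpha[k] \dep \theta_1\) by the first dependency axiom. Once these conditions are systematically discharged, the lemma reduces to structural bookkeeping, which is exactly why the body of the paper informally invokes it to justify assuming w.l.o.g.\ that the primary connector of \(X\) matches the outer shape of \(\theta_1\).
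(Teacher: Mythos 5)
Your proof follows the paper's almost exactly: the paper likewise inducts over the term/derivation structure, taking \(X' = X\) as soon as the rule producing the outermost symbol of \(\theta_1\) is reached (act., the parallel-group rules, or the repl.\ rules) and recursing through pre., res.\ and the sum rules, and it likewise justifies the second part by arguing that the concurrency hypothesis forces \(\theta_2\)'s derivation to strip the very same wrappers, so that the two applications of the first part expose the same \(Y'\). The one spot where your argument is a little quick --- the sub-case of a backward \(\theta_1\) re-standardising its sum branch so that \(\theta_2\) may switch to the sibling, which need not reduce to an atomic leaf --- is treated at exactly the same level of informality in the paper (a single illustrative counterexample showing why concurrency is needed), so your proposal does not genuinely diverge from the published proof.
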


\begin{proof}
	The first part is by case on \(\theta_1\), and on the primary connector of \(X\):
	\begin{description}
		\item[\protect{\(\theta_1\) is \(\alpha[k]\)}]
		We reason by induction on the size of \(X\), and sometimes on the direction of the transition:
		
		\begin{description}
			\item[If \(X\) is \(\beta.X''\)] Then it must be the case that \(\beta = \alpha\), otherwise \(X \frevredl{\protect{\alpha[k]}} Y\) would not hold, and we take \(X' = X\).
			\item[\protect{If \(X\) is \(\beta[m].X''\)}] If \(\beta[m].X'' \redl{\protect{\alpha[k]}} \beta[m].Y''\), then we apply the induction hypothesis to \(X''\redl{\protect{\alpha[k]}} Y''\), otherwise there are two cases: if \(X \revredl{\protect{\alpha[k]}} Y\) and \(\beta[m] \neq \alpha[k]\), then we use the induction hypothesis, otherwise we let \(X' = X\) and \(Y' = Y\).
			\item[If \(X\) is \(X_1 + X_2\)] Then if \(X_1 + X_2 \revredl{\protect{\alpha[k]}} X_1'' + X_2''\) with \(\std(X_1)\) (\resp \(\std(X_2)\)), we use the induction hypothesis on \(X_2 \revredl{\protect{\alpha[k]}} X_2''\) (\resp \(X_1 \revredl{\protect{\alpha[k]}} X_1''\)). If the transition is forward we use the induction hypothesis on \(X_1 \redl{\protect{\alpha[k]}} X_1'' \) (\resp \(X_2 \redl{\protect{\alpha[k]}} X_2''\)) if the last rule of the derivation for the transition is \(+_{\L}\) (\resp \(+_{\R}\)).
			\item[If \(X\) is \(X''\bs\alpha\)] Then \(Y = Y'' \bs \alpha\) and we use the induction on \(X'' \frevredl{\protect{\alpha[k]}} Y''\).
			\item[If \(X\) is \(X_{\L} \mid X_{\R}\)] Then \(X \frevredl{\protect{\alpha[k]}} Y\) is not possible.
			\item[If \(X\) is \(!X'\)] Then \(X \frevredl{\protect{\alpha[k]}} Y\) is not possible.
		\end{description}
		
		\item[\protect{\(\theta_1\) is \(\mid_{\L}\theta_1'\), \(\mid_{\R}\theta_1'\), or \(\langle \mid_{\L} \theta_{\L}, \mid_{\R} \theta_{\R}\rangle\)}]
		We reason by case on the primary connector of \(X\), and sometimes on the direction of the transition:
		
		\begin{description}
			\item[If \(X\) is \(\beta.X''\)] Then \(\beta.X'' \frevredl{\protect{\theta_1}} Y\) is not possible.
			\item[\protect{If \(X\) is \(\beta[m].X''\)}] If \(\beta[m].X'' \frevredl{\protect{\theta_1}} \beta[m].Y''\), then we apply the induction hypothesis to \(X'' \frevredl{\protect{\theta_1}} Y'' \).
			\item[If \(X\) is \(X_1 + X_2\) or \(X''\bs\alpha\)] This is the same reasoning as the sum and prefix cases whenever \(\theta_1 = \alpha[k]\), we apply the induction hypothesis to the subterm. %
			\item[If \(X\) is \(X_{\L} \mid X_{\R}\)] Then \(X' = X\).
			\item[If \(X\) is \(!X'\)] Then \(X \frevredl{\protect{\theta_1}} Y\) is not possible.
		\end{description}
		\item[\protect{\(\theta_1\) is \(!\theta_1'\)}]
		We reason by case on the primary connector of \(X\), and sometimes on the direction of the transition:
		
		\begin{description}
			\item[If \(X\) is \(\beta.X''\)] Then \(\beta.X'' \frevredl{\protect{!\alpha_1'}} Y\) is not possible.
			\item[\protect{If \(X\) is \(\beta[m].X''\)}] If \(\beta[m].X'' \frevredl{\protect{!\theta_1'}} \beta[m].Y''\), then we apply the induction hypothesis to \(X'' \frevredl{\protect{!\theta_1'}} Y'' \).
			\item[If \(X\) is \(X_1 + X_2\) or \(X''\bs\alpha\)] This is the same reasoning as the sum and prefix cases whenever \(\theta_1 = !\theta_1'\), we apply the induction hypothesis to the subterm. %
			\item[If \(X\) is \(X_{\L} \mid X_{\R}\)] Then \(X \frevredl{\protect{!\theta_1'}} Y\) is not possible.
			\item[If \(X\) is \(!X'\)]  Then \(X' = X\).
		\end{description}
		
	\end{description}
	For the second part, it amounts to observe that as \enquote{extracting} the sub-terms does not change the labels of the transitions, if  \(\theta_1 \conc \theta_2\) in \(X \frevredl{\theta_1} Y \frevredl{\theta_2} Z\), then applying the first part of this lemma to \(Y \frevredl{\theta_2} Z\) gives us the very same \(Y'\) as we obtain after applying the first part of this lemma to \(X \frevredl{\theta_1} Y\): we simply \enquote{go under} the \(+\), \(\bs \alpha\) and \(a[k]\) connectors whenever possible, and the exact same structure is stripped from the target of the transition. %
	The condition \(\theta_1 \conc \theta_2\) is required for the structures stripped away from \(X\) and \(Y\) to match: without this condition, we can get \eg \((a[m].X + b.Y)\bs c \revredl{\protect{a[m]}} (a.X + b.Y)\bs c \redl{\protect{b[m]}}  (a.X + b[m].Y) \bs c\) whose subterms \(a[m].X\) and \(b.Y\) are not equal, hence violating the second part of the lemma.
\end{proof}
\end{document}